\documentclass[letterpaper, 10 pt, conference]{ieeeconf}  

\IEEEoverridecommandlockouts                              
\usepackage{textcomp}

\usepackage{cite}
\usepackage{amsmath,amsfonts,amssymb,graphicx,color,psfrag}
\usepackage{mathrsfs}
\usepackage{graphics}
\usepackage{multirow}
\usepackage{mathtools}
\usepackage{commath}
\usepackage{comment}
\usepackage{caption}
\usepackage{subcaption}
\usepackage{palatino}
\usepackage{bibentry}

\usepackage[pdfencoding=auto, psdextra]{hyperref}
\usepackage{hyperref}
 \hypersetup{
     colorlinks=true,
     linkcolor=blue,
     filecolor=blue,
     citecolor = blue,      
     urlcolor=cyan,
     }

\newtheorem{theorem}{Theorem}[section]
\newtheorem{proposition}[theorem]{Proposition}
\newtheorem{lemma}[theorem]{Lemma}
\newtheorem{remark}[theorem]{Remark}

\newtheorem{definition}{Definition}

\newtheorem{assumption}[theorem]{Assumption}

\newcommand{\subscr}[2]{{#1}_{\textup{#2}}}
\newcommand{\col}[1]{\text{col}\left({#1}\right)}
\DeclareMathOperator{\sign}{sign}

\def\BibTeX{{\rm B\kern-.05em{\sc i\kern-.025em b}\kern-.08em
    T\kern-.1667em\lower.7ex\hbox{E}\kern-.125emX}}
\begin{document}
\title{Comparison of Non-Deterministic Nonlinear Systems}
\author{Shivam Bajaj, Varun S. Madabushi, Maegan Tucker, and Vijay Gupta
}

\maketitle

\begin{abstract}
We characterize a notion of system comparison, termed as $(T_e,\gamma,\delta)$-similarity, for non-deterministic nonlinear systems. Building on a similar notion recently proposed for stable linear systems, the proposed notion characterizes the dissimilarity between the outputs, measured using the $\mathcal{L}_2$ norm, of two nonlinear dynamical systems in terms of their inputs and disturbances. By establishing a relationship between $(T_e,\gamma,\delta)$-similarity and differential dissipativity, we establish equivalence between $(T_e,\gamma,\delta)$-similarity of nonlinear systems and the $(T_e,\gamma,\delta)$-similarity of their differential dynamics. We characterize the $(T_e,\gamma,\delta)$-similarity for nonlinear systems as a Linear Matrix Inequality feasibility problem and also provide necessary and sufficient conditions for solving this feasibility problem. We demonstrate the utility of the proposed notion through its use in two applications: (i) robust hierarchical control applied to a planar aircraft and (ii) the improvement (or design) of abstract models applied to the Moore-Greitzer model and an electronic circuit.
\end{abstract}


\section{Introduction}\label{sec:introduction}
Bisimulation and simulation are central notions of equivalence that have been used extensively for reducing the complexity of dynamical systems. Originating in finite state systems \cite{milner1989communication,park2005concurrency,hennessy1985algebraic}, these notions have also been defined and used for continuous-time dynamical systems~\cite{pappas2002hierarchically,pappas2002consistent,pappas2003bisimilar,lafferriere1997hybrid,van2004equivalence,tabuada2009verification}. Intuitively, simulation characterizes equivalence of one system to another whereas bisimulation characterizes mutual equivalence, i.e., ensuring that two systems are equivalent to each other. Relaxing the equivalence requirement to an approximation requirement, as measured by (a bound over) the maximum distance between the outputs of the two systems, leads to the notion of approximate (bi)simulation~\cite{haghverdi2003bisimulation,girard2011approximate,haghverdi2005bisimulation,prabhakar2018simulations,prabhakar2016bisimulations}.  These notions have become particularly important with the recent interest in applications such as layered or hierarchical control \cite{karimoddini2015hierarchical,kurtz2019formal}, supervisory control \cite{farhat2020control,zhou2006control}, and formal verification \cite{shali2025design,fisler2002bisimulation}.

Early works on approximate (bi)simulation  characterized the bound on the external behaviors using an $\mathcal{L}_{\infty}$ signal norm.  Since one typically enforces equivalence by designing a controller \cite{farhat2020control,zhou2006control}, an equivalence notion that characterizes a bound on the external behaviors using an $\mathcal{L}_{2}$ signal norm is useful to permit the use of control theoretic tools such as dissipativity or robust control \cite{van2000l2,zhou1998essentials}. Recently,~\cite{pirastehzad2024comparison} proposed the notion of $(\gamma,\delta)$-similarity for stable continuous-time non-deterministic linear time-invariant (LTI) dynamical systems using the $\mathcal{L}_{2}$ signal norm. While a precise definition is given in Section \ref{sec:prelim}, informally, two asymptotically stable LTI systems with zero initial conditions are $(\gamma,\delta)$-similar if, for given positive constants $\gamma,\delta$ there exists positive constants $\eta,\mu$, such that for every input $u_1,u_2\in\mathcal{L}_2$ to system $1$ and system $2$, respectively, and disturbance $d_1\in\mathcal{L}_2$ to system $1$, a disturbance $d_2\in\mathcal{L}_2$ to system $2$ such that the outputs $y_1,y_2$ of system $1$ and system $2$ satisfy $\norm{y_1-y_2}^2\leq h_1(\gamma,\delta,\norm{u}_1,\norm{u_2})+h_2(\mu,\eta,\norm{d_1},\norm{d_2}),$ for some non-negative functions $h_i(\cdot)$ which are specified in Section \ref{sec:prelim}. We note here that the notion does not require the inputs $u_1,u_2$ of the two systems to be identical, which allows for compositional reasoning for large-scale interconnected systems. This notion has been used for modular verification \cite{pirastehzad2026modular}, controller synthesis \cite{pirastehzad2026hierarchical}, and abstraction design \cite{pirastehzad2025hierarchical,bajaj2025online} for linear systems.

Given these applications and the desirable properties of $(\gamma,\delta)$-similarity for linear systems combined with the fact that many complex systems are nonlinear, obtaining a similar notion for nonlinear and non-deterministic systems is important. The notion of $(\gamma,\delta)$-similarity is not sufficient for nonlinear systems as it does not account for the presence of finite escape times of nonlinear systems. If system $1$ has a finite escape time, then there may not exist any $d_2\in\mathcal{L}_2$ for which system $2$ is $(\gamma,\delta)$-similar to system $1$ for all $T>0$. Additionally, most frameworks for $H^\infty$ control for nonlinear systems tend to exhibit computational challenges \cite{manchester2018robust}, which is an issue since this machinery ($H^\infty$ control theory) is used to characterize $(\gamma,\delta)$-similarity as an LMI condition. 

%



In this work, we consider two non-deterministic nonlinear systems in continuous time with possibly non-zero initial conditions, each with a finite-escape time greater than a constant $T_e>0$ and introduce the notion of $(T_e,\gamma,\delta)$-similarity that characterizes an upper bound between the external behaviors of two nonlinear dynamical systems for all finite time $T$ satisfying $0<T<T_e$. We note that we do not impose any stability requirement on the systems and are only concerned with the existence of $\gamma$ and $\delta$.  We further use recent developments in contraction theory and feedback design for nonlinear systems \cite{manchester2017control,manchester2018robust,tsukamoto2021contraction} to establish that $(T_e,\gamma,\delta)$-similarity can be characterized via a feasibility problem which is convex in the decision variables and does not require computation of system trajectories. We also characterize a necessary and a sufficient condition for the feasibility problem. 
Additionally, by drawing connections between $(T_e,\gamma,\delta)$-similarity and differential dissipativity,  we establish that $(T_e,\gamma,\delta)$-similarity of nonlinear systems is equivalent to $(T_e,\gamma,\delta)$-similarity of the respective differential dynamics. Accounting for non-determinism in both the systems considered paves the way for applications such as robust hierarchical control of complex systems. 
We demonstrate the applicability of the proposed framework through two such applications. First, we consider the use of the proposed notion of similarity for robust hierarchical control applied to a planar vertical and take off landing (PVTOL) aircraft, where we first determine the control input and consequently a reference trajectory using an abstract similar model and then design a reference tracking controller for the underlying PVTOL. While this idea has been explored in the past \cite{li2021model,kim2023layered,karimoddini2015hierarchical,kurtz2019formal}, using  $(T_e,\gamma,\delta)$-similarity offers multiple advantages such as an analytical bound on the error between the outputs, disturbance rejection, and accounting for actuation errors. In the second application, we use the proposed notion for the improvement (or design) of existing abstract models applied to the Moore-Greitzer model and an electronic circuit model.

The rest of this work is organized as follows. We present the notion of $(T_e,\gamma,\delta)$-similarity for nonlinear systems along with an LMI characterization in Section~\ref{sec:definition}. 
We establish
necessary and sufficient conditions that are simpler to check in Section \ref{sec:conditions}. In Section \ref{sec:properties}, we introduce the notion of differential $(T_e,\gamma,\delta)$-similarity and study the relation between the similarity of nonlinear systems and their differential dynamics. In Section \ref{sec:numerics}, we propose two applications and present numerical results for these applications. Finally, we conclude this work and discuss future works in Section \ref{sec:conclusion}.

\paragraph*{Notation} We use $\abs{x}$ to denote the standard Euclidean norm of a real vector $x$. We define the operator $\col{\cdot,\cdot}$ such that $\col{x_1,x_2} = \begin{bmatrix}
    x_1^\top, x_2^\top
\end{bmatrix}^\top$ for any vectors $x_1\in \mathbb{R}^{n_1}$ and $x_2\in \mathbb{R}^{n_2}$.
For a given $T$, we denote $\mathcal{L}_{2,T}$ as the space of square-integrable vector signal on the interval $[0,T]$.
The associated squared-norm of a signal $x\in \mathcal{L}_{2,T}$ is denoted as $\norm{x}^2_{[0,T]}\coloneqq \int_0^T \abs{x(t)}^2 dt < \infty$.
Given a positive definite matrix $M$, we define $\norm{x}_{M,T}^2 = \int_0^T x(t)^\top Mx(t) dt $. We use $\mathbf{I}$ and $\mathbf{0}$ to denote identity matrix and zero matrix of appropriate dimensions, respectively. A Riemannian metric on $\mathbb{R}^n$ is a symmetric positive-definite matrix function, smooth on $x$ and denoted as $M(x)$, which defines a Euclidean structure via an inner product $\langle\Delta_1,\Delta_2\rangle_x = \Delta_1^\top M(x)\Delta_2$, where $\Delta_1$ and $\Delta_2$ are any two tangent vectors. A metric is called uniformly bounded if, for some positive constants $\alpha_2\geq \alpha_1$, $\alpha_1 \mathbf{I}\leq M(x) \leq \alpha_2 \mathbf{I}, \forall x$. 
Let $\Gamma(a,b)$ be the set of smooth paths between two points $a$ and $b$ in $\mathbb{R}^n$, where each $c\in \Gamma(a,b)$ is a piecewise smooth mapping $c:[0,1]\to \mathbb{R}^n$ and $c(0) = a$ and $c(1) = b$. For a given metric $M(x)$, the \emph{energy} of a path is defined as
\begin{align*}
    E(c) \coloneqq \int_0^1 \left(\frac{\partial c(s)}{\partial s}\right)^\top M\left(c(s)\right)\frac{\partial c(s)}{\partial s} ds.
\end{align*}
Further, we use $E(a,b)\coloneqq \inf_{c\in \Gamma(a,b)} E(c)$ for $a,b\in \mathbb{R}^n$ to denote the minimal energy of a path joining $a$ and $b$.

\section{\texorpdfstring{$(T_e,\gamma,\delta)$}{(Te,\gamma,\delta)}-Similarity for Nonlinear Systems}
\label{sec:definition}

In this section, formally define $(T_e,\gamma,\delta)$-similarity for nonlinear systems and present an LMI characterization for the same. We will first present the definition of $(\gamma,\delta)$-similar systems for stable LTI systems which may be useful for readers unfamiliar with \cite{pirastehzad2024comparison}. Then we will present the notion of $(T_e,\gamma,\delta)$-similarity for nonlinear systems followed by its LMI characterization.  

\subsection{A Refresher on \texorpdfstring{$(\gamma,\delta)$}{(\gamma,\delta)}-similarity for Stable Continuous-Time LTI systems}\label{sec:prelim}
In this section, we provide the definition of $(\gamma,\delta)$-similarity for stable continuous-time LTI systems. 
While the definitions provided in this section are not instrumental for this work, we provide them for readers who are unfamiliar with the notion of $(\gamma,\delta)$-similarity. We refer to \cite{pirastehzad2024comparison} for additional details.

For $i\in\{1,2\}$, consider two continuous-time $0$-asymptotically stable linear systems  of the form:

\begin{equation}\label{eq:dynamics_prelim}
    \Sigma_i \colon \begin{cases}
        \dot{x_i} &= A_ix_i + B_iu_i + E_id_i; x_i(0) = 0\\
        y_i &= C_ix_i, 
    \end{cases}
\end{equation}

where $x_i \in \mathbb{R}^{n_i}$, $u_i\in \mathbb{R}^{m}$, $d_i\in \mathbb{R}^{q_i}$, and $y_i\in\mathbb{R}^{p}$ represent the state, input, disturbance, and output, respectively, of system $\Sigma_i$.

\begin{definition}[$(\gamma,\delta)$-similarity]\label{def:similar_systems}
   Given $0$-asymptotically stable systems $\Sigma_i$, $i\in\{1,2\}$, and positive constants $\gamma$ and $\delta$, system $\Sigma_2$ is said to be $(\gamma,\delta)$-similar to $\Sigma_1$, if there exist constants $\epsilon,\eta>0,\mu>0$ such that for every input $u_1,u_2\in \mathcal{L}_2$ and disturbance $d_1\in\mathcal{L}_2$, there exists a disturbance $u_d\in\mathcal{L}_2$ such that
   \begin{equation}\label{eq:similarity}
   \begin{split}
       \norm{y_1-y_2}^2\leq & \gamma\norm{u_1-u_2}^2 + (\delta-\epsilon)\norm{\col{u_1,u_2}}^2\\
       &+ (\mu-\epsilon)\norm{d_1}^2 - \eta\norm{u_d}^2.
   \end{split}
   \end{equation}
\end{definition}

Next, we present the finite-time analogue of Definition \ref{def:similar_systems}.

\begin{definition}[Finite-time $(\gamma,\delta)$-similarity]\label{def:similar_systems_finite}
   Given $0$-asymptotically stable systems $\Sigma_i$, $i\in\{1,2\}$, a terminal time $T>0$, and positive constants $\gamma$ and $\delta$, system $\Sigma_2$ is said to be $(\gamma,\delta)$-similar to $\Sigma_1$ in the interval $[0,T]$, if there exist constants $\epsilon,\eta>0,\mu>0$ such that for every input $u_1,u_2\in \mathcal{L}_{2,T}$ and disturbance $d_1\in\mathcal{L}_{2,T}$, there exists a disturbance $u_d\in\mathcal{L}_{2,T}$ such that
   \begin{equation}\label{eq:similarity_finite}
   \begin{split}
       \norm{y_1-y_2}^2_{[0,T]}\leq & \gamma\norm{u_1-u_2}^2_{[0,T]} + (\delta-\epsilon)\norm{\col{u_1,u_2}}^2_{[0,T]}\\
       &+ (\mu-\epsilon)\norm{d_1}^2_{[0,T]} - \eta\norm{u_d}^2_{[0,T]}.
   \end{split}
   \end{equation}
\end{definition}
The notion of $(\gamma,\delta)$-similarity measures the similarity of the trajectories of $\Sigma_1$ and $\Sigma_2$ in terms of their input-output behavior. In other words, it measures to what extend is the behavior of $\Sigma_1$ is contained in $\Sigma_2$.
While the parameter $\epsilon$ is for technical purposes, the parameter $\gamma$ characterizes the deviation in the outputs with respect to the dissimilarity in the inputs.  The parameter $\delta$ characterizes the deviation in the outputs with respect to the individual inputs. Finally, the constants $\mu$ and $\eta$ characterize the deviation in the output with respect to disturbances.

\subsection{System Model and Definition}
Consider two continuous time nonlinear systems $\Sigma_1$ and $\Sigma_2$ of the following form:

\begin{equation}\label{eq:dynamics_1}
    \Sigma_1: \begin{cases}
        \dot{x}_1 = f_1(x_1,u_1,d_1), \quad x_1(0) = x_{1,0};\\
        y_1 = g_1(x_1),
    \end{cases}
\end{equation}

\begin{equation}\label{eq:dynamics_2}
    \Sigma_2: \begin{cases}
        \dot{x}_2 = f_2(x_2,u_2,d_2,u_d), \quad x_2(0) = x_{2,0};\\
        y_2 = g_2(x_2),
    \end{cases}
\end{equation}
where, for $i\in\{1,2\}$, $x_i \in \mathbb{R}^{n_i}$ denotes the states, $u_i\in \mathbb{R}^m$ denotes the control input, $d_i\in \mathbb{R}^{q_i}$ denotes arbitrary unknown disturbance, and $y_i\in\mathbb{R}^p$ denotes the output. The term $u_d\in\mathbb{R}^{q_d}$ denotes an additional external control (referred to as driving input) on $\Sigma_2$. We assume that $f_{i}$ and $g_i$, $i\in\{1,2\}$, are in $\mathcal{C}^1$ and are time-invariant.
For system $\Sigma_1$ (resp. $\Sigma_2$), we use $T_e^1 >0$ (resp. $T_e^2>0$) to denote the finite escape time.

We compare system $\Sigma_1$ and system $\Sigma_2$ characterized in \eqref{eq:dynamics_1} and \eqref{eq:dynamics_2} according to the following definition.

\begin{definition}[$(T_e,\gamma,\delta)$-similarity]\label{def:similarity}
    Given systems $\Sigma_1$ and $\Sigma_2$ of the form in \eqref{eq:dynamics_1} and \eqref{eq:dynamics_2}, respectively,  system $\Sigma_2$ is said to be $(T_e,\gamma,\delta)$-similar to $\Sigma_1$ if there exist positive constants $\gamma, \delta, \mu$, and $\eta$ such that for every  $u_1, u_2, d_1, d_2 \in \mathcal{L}_{2,T}$ and every $x_{i,0}$, $i\in \{1,2\}$, there exist a driving input $u_d\in \mathcal{L}_{2,T}$ such that, for all finite $T$ satisfying $0<T<T_e$,
    \begin{equation}\label{eq:similarity_def}
    \begin{split}
        \norm{y_1-y_2}^2_{[0,T]} &\leq \gamma\norm{u_1-u_2}^2_{[0,T]} + \delta \norm{\col{u_1,u_2}}^2_{[0,T]} \\
        &+ \mu\norm{\col{d_1,d_2}}^2_{[0,T]} -\eta\norm{u_d}^2_{[0,T]} \\
        &+ b\left(x_{1,0},x_{2,0}\right),
    \end{split}
    \end{equation}
    for some  finite-valued continuous function $b(x_{1,0},x_{2,0})\geq 0$ that satisfies $b(x,x)=0$.
\end{definition}

For well-posedness of \eqref{eq:similarity_def}, we assume that the initial conditions $x_0 \coloneqq \col{x_{1,0},x_{2,0}}$ and the signal $w \coloneqq \col{u_1, u_2, d_1, d_2}$ are such that $b(x_0)=0$ and $w=0$ do not jointly hold.

The notion of $(T_e,\gamma,\delta)$-similarity characterizes the similarity of $\Sigma_1$ and $\Sigma_2$ in terms of their input-output behavior.  The constant $\gamma$ characterizes the effect of dissimilar inputs on the error between the outputs of the two systems and the constant $\delta$ characterizes the effect of each input on the error, which is essential due to the presence of actuation errors. Definition \ref{def:similarity} captures the case when $u_1 = u_2$. Finally, the constant $\mu$ characterizes the effect of disturbance $d_1$ and $d_2$ on the error between the outputs of the two systems. The notion of $(T_e\gamma,\delta)$-similarity characterizes to what extent the input–output behavior of $\Sigma_1$ is contained in $\Sigma_2$.

Apart from the fact that Definition \ref{def:similarity} considers nonlinear systems, Definition \ref{def:similarity} differs from the definition of $(\gamma,\delta)$-similarity in \cite{pirastehzad2024comparison} in multiple ways.
Definition \ref{def:similarity} generalizes Definition \ref{def:similar_systems_finite} by accounting for non-zero initial conditions as well as non-determinism in both system $\Sigma_1$ and system $\Sigma_2$. 
In other words, for a fixed $T$, when system $\Sigma_1$ and $\Sigma_2$ are linear time invariant with $b(x_{1,0},x_{2,0})=0$ and $d_2(t)=0$, Definition \ref{def:similarity} reduces to Definition \ref{def:similar_systems_finite}. 
In Section \ref{sec:numerics}, we show that accounting for non-determinism in both systems allows one to use this framework for robust hierarchical control of complex systems.
Additionally, Definition \ref{def:similar_systems_finite} requires system $\Sigma_1$ and $\Sigma_2$ to be $0$-asymptotically stable and also requires constants $\gamma$ and $\delta$ to be given. In contrast, Definition \ref{def:similarity} does not impose any stability requirement and is only concerned with the existence of $\gamma$ and $\delta$.  
We also note that Definition \ref{def:similarity} considers signal norms in the $\mathcal{L}_{2,T}$ space in contrast to $\mathcal{L}_2$ in Definition \ref{def:similar_systems_finite}. 
This distinction is primarily due to the fact that nonlinear systems have finite escape times and that the solution of the nonlinear system may not exist. 
Under the assumption that system $\Sigma_1$ is asymptotically stable and that the solution for both system $\Sigma_1$ and $\Sigma_2$ exist for all $T>0$, it is possible to extend this work to when $u_1,u_2,d_1,d_2 \in \mathcal{L}_2$. 

Definition \ref{def:similarity} characterizes a notion of similarity between two systems. However, using Definition \ref{def:similarity} directly to check for $(T_e,\gamma,\delta)$-similarity is challenging.
Thus, a verifiable condition for $(T_e,\gamma,\delta)$-similarity that does not require computing the trajectories of the two systems is desirable. We provide such a characterization in the next subsection by considering an alternate formulation.

\subsection{LMI Characterization of \texorpdfstring{$(T_e,\gamma,\delta)$}{(Te,\gamma,\delta)}-Similarity}\label{sec:characterization}
Let

\begin{equation}\label{eq:Q_and_R}
    Q\coloneqq \begin{bmatrix}
    (\gamma+\delta)\mathbf{I} & -\gamma\mathbf{I} & \mathbf{0} & \mathbf{0}\\
    -\gamma\mathbf{I} & (\gamma+\delta)\mathbf{I} & \mathbf{0} & \mathbf{0}\\
    \mathbf{0} & \mathbf{0} & \mu\mathbf{I} & \mathbf{0} \\
    \mathbf{0} & \mathbf{0} & \mathbf{0} & \mu\mathbf{I}
    \end{bmatrix}, \quad
    R \coloneqq \begin{bmatrix}
        \mathbf{I} & \mathbf{0}\\
        \mathbf{0} & \eta\mathbf{I}
    \end{bmatrix}.
\end{equation}

Further, let $x\coloneqq \col{x_1,x_2}$, $w\coloneqq \col{u_1,u_2,d_1,d_2}$, and $z\coloneqq \col{y_1-y_2,d_2}$. Consider the composite system
\begin{equation}\label{eq:dynamics_composite}
    \Sigma: \begin{cases}
        \dot{x} = f(x,w,u_d), \quad x(0) = x_0;\\
        z = g(x,u_d),
    \end{cases}
\end{equation}
where 
\begin{align*}
    f(x,w,u_d) &\coloneqq \begin{bmatrix}
        f_1(x_1,u_1,d_1)\\
        f_2(x_2,u_2,d_2,u_d)
    \end{bmatrix},
    z(x,u_d) \coloneqq
    \begin{bmatrix}
            y_1 - y_2\\
            u_d
    \end{bmatrix},\\
    g(x,u_d) &=\begin{bmatrix}
        g_1(x_1)-g_2(x_2)\\
        u_d
    \end{bmatrix}.
\end{align*}
Then we can express \eqref{eq:similarity_def} equivalently as the following:

\begin{equation}\label{eq:similarity_comp}
    \norm{z}_{R,T}^2 \leq \norm{w}_{Q,T}^2 + b\left(x_{0}\right).
\end{equation}


As we will see shortly, the condition for $(T_e,\gamma,\delta)$-similarity can be characterized in terms of the differential dynamics of $\Sigma_i$, $i\in\{1,2\}$, and the matrices $Q$ and $R$. To this end, the differential dynamics of $\Sigma_1$ of the form in \eqref{eq:dynamics_1} is:

\begin{align}\label{eq:diff_dyn_1}
    \Delta\Sigma_1: \begin{cases}
        \Delta\dot{x}_1 = A_1\Delta x_1 + B_1\Delta u_1 + E_1\Delta d_1\\
        \Delta y_1 = C_1\Delta x_1,
    \end{cases}
\end{align}
where $A_1(x_1,u_1,d_1) = \frac{\partial f_1}{\partial x_1}$, $B_1(x_1,u_1,d_1) = \frac{\partial f_1}{\partial u_1}$, $E_1(x_1,u_1,d_1) = \frac{\partial f_1}{\partial d_1}$, and $C_1(x_1) = \frac{\partial g_1}{\partial x_1}$. 
Similarly, the differential dynamics of $\Sigma_2$ of the form in \eqref{eq:dynamics_2} is:

\begin{align}\label{eq:diff_dyn_2}
    \Delta\Sigma_2: \begin{cases}
        \Delta\dot{x}_2 = A_2\Delta x_2 + B_2\Delta u_2 + E_2\Delta d_2+E_d\Delta u_d\\
        \Delta y_2 = C_2\Delta x_2,
    \end{cases}
\end{align}
where $A_2(x_2,u_2,d_2,u_d) = \frac{\partial f_2}{\partial x_2}$,  $B_2(x_2,u_2,d_2,u_d) = \frac{\partial f_2}{\partial u_2}$, $E_2(x_2,u_2,d_2,u_d) = \frac{\partial f_2}{\partial d_2}$, $E_d(x_2,u_2,d_2,u_d) = \frac{\partial f_2}{\partial u_d}$, and $C_2(x_2) = \frac{\partial g_2}{\partial x_2}$. 

Analogous to that for systems $\Sigma_i$, $i\in \{1,2\}$, the composite system of differential dynamics $\Delta\Sigma_i$ is:

\begin{align}\label{eq:diff_dyn_composite}
    \Delta\Sigma: \begin{cases}
        \Delta\dot{x} = A\Delta x + B\Delta w + E\Delta u_d\\
        \Delta z = C\Delta x + D\Delta u_d,
    \end{cases}
\end{align}
where 
\begin{subequations}
\begin{align}
    \begin{split}
        A &= \begin{bmatrix}
        A_1 & \mathbf{0}\\
        \mathbf{0} & A_2
    \end{bmatrix}=  \frac{\partial f}{\partial x},
    \end{split}\\
    \begin{split}
        B &= \begin{bmatrix}
        B_1 & \mathbf{0} & E_1 & \mathbf{0}\\
        \mathbf{0} & B_2 & \mathbf{0} & E_2
    \end{bmatrix} = \frac{\partial f}{\partial w},
    \end{split}\\
    \begin{split}\label{eq:E}
        E  &= \begin{bmatrix}
        \mathbf{0}\\
        E_d
    \end{bmatrix} = \frac{\partial f}{\partial d_2},
    \end{split}\\
    \begin{split}
        C  &=  \begin{bmatrix}
        C_1 & -C_2\\
        \mathbf{0} & \mathbf{0}
    \end{bmatrix} = \frac{\partial z}{\partial x},
    \end{split}\\
    \begin{split}
        D  = \begin{bmatrix}
        \mathbf{0}\\
        \mathbf{I}
    \end{bmatrix} = \frac{\partial z}{\partial u_d}.
    \end{split}
\end{align}
\end{subequations}

To reduce notational overload, we do not explicitly show the dependence on the $x(t), w(t), u_d(t)$ in the notation of system matrices of $\Delta \Sigma$.
The differential dynamics $\Delta\Sigma_i$ are linear time-varying and are evaluated at particular trajectories of $x(t),w(t),u_d(t)$ of system \eqref{eq:dynamics_composite}. For the remainder of this work, we assume the following.

\begin{assumption}\label{assum:d_2_feedback}
    If $\Sigma_2$ is $(T_e,\gamma,\delta)$-similar to $\Sigma_1$, then the driving input $u_d(t)$ follows a feedback law, i.e.,
    \begin{equation}\label{eq:dist_form}
    u_d(t) = k(x(t)).
\end{equation}
\end{assumption}

Assumption \ref{assum:d_2_feedback} allows us to characterize the notion of $(T_e,\gamma,\delta)$-similarity as a verifiable condition that depends on the system matrices of $\Delta \Sigma$. 
It is possible that, for some $(T_e,\gamma,\delta)$-similar systems, no $u_d(t)$ of the form \eqref{eq:d_2_form} exist. 
Assumption \ref{assum:d_2_feedback} implies that $u_d(t)$ is a function of the state of both system $\Sigma_1$ and system $\Sigma_2$. 
The requirement on the state information is not restrictive as there are applications, such as hierarchical control wherein the abstract model is a \emph{simulated model}, in which the state of the simulated model is always available.
The requirement on the state information is also unavoidable as we do not impose any restriction on the external disturbance $d_1(t)$. 
For the case of LTI systems \cite{pirastehzad2024comparison}, it was formally established that $u_d(t)$ must be of the state-feedback form requiring the state information from both systems.

Selecting $u_d(t)$ of the form \eqref{eq:dist_form} implies that system $\Sigma_2$ is of the form (cf. equation \eqref{eq:dynamics_2})
\begin{align*}
    \dot{x}_2 &= f_2(x_1,x_2,u_2,d_2), \quad x_2(0) = x_{2,0};\\
        y_2 &= g_2(x_2)
\end{align*}
with differential dynamics as
\begin{equation}\label{eq:diff_dyn_cl_2}
    \subscr{\Delta\Sigma}{2,cl}: \begin{cases}
        \Delta\dot{x}_2 = A_2\Delta x_2 + B_2\Delta u_2 + E_2\Delta d_2 + E_dK\Delta x\\
        \Delta y_2 = C_2\Delta x_2,
    \end{cases}
\end{equation}
where $K(x) = \frac{\partial k(x)}{\partial x} = \begin{bmatrix}
    K_1 & K_2
\end{bmatrix}$.
The differential dynamics of the closed-loop composite system can be expressed as
\begin{equation}\label{eq:diff_dyn_cl}
    \subscr{\Delta\Sigma}{cl}: \begin{cases}
        \Delta\dot{x} = \mathcal{A}\Delta x + B\Delta w\\
        \Delta z = \mathcal{C}\Delta x,
    \end{cases}
\end{equation}
where $\mathcal{A} = \left(A+EK\right),$ and $\mathcal{C} = \left(C + DK\right)$.

We are now ready to present the main result of this section.
The central idea is to select a driving input $u_d(t)$ such that, when applied to the composite system $\eqref{eq:dynamics_composite}$, the path of minimum energy joining $x(t)$ to $x^*(t)\coloneqq \col{x_1(t),x_1(t)}$ is minimized. 
We assume the following:
\begin{assumption}\label{assum:zero_measure}
 Let $x^*(t)\coloneqq \col{x_1(t),x_1(t)}$. For the composite system \eqref{eq:dynamics_composite} and its differential dynamics \eqref{eq:diff_dyn_composite}, the set of times $t\in [0,T)$ for which $x(t)$ is in the cut locus of $x^*(t)$ has zero measure.
\end{assumption}


We now present an LMI characterization of $(T_e,\gamma,\delta)$-similarity (cf. Definition \ref{def:similarity}) in the following result.

\begin{theorem}\label{thm:LMI_characterization}
    Suppose Assumptions \ref{assum:d_2_feedback} and \ref{assum:zero_measure} hold. Further suppose that there exists a uniformly-bounded dual metric $W(x)$ and a matrix function $Y(x)$ that satisfy 
    \begin{equation}\label{eq:LMI_W}
        \begin{bmatrix}
            \mathcal{W} & \left(CW+DY\right)^\top & B \\
            CW+DY & R^{-1} & \mathbf{0}\\
            B^\top & \mathbf{0} & Q
        \end{bmatrix} \geq 0,
    \end{equation}
    where $\mathcal{W} = \dot{W} -WA^\top -AW-EY -Y^\top E^\top$. Then, system $\Sigma_2$ is $(T_e,\gamma,\delta)$-similar to system $\Sigma_1$.
\end{theorem}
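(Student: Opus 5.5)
The plan follows the robust control contraction metric approach of Manchester and Slotine. We convert the dual LMI \eqref{eq:LMI_W} into a primal differential dissipation inequality, build a feedback $k(x)$ from it, and integrate along geodesics.

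\textbf{Step 1: primal form.} Set $M(x)\coloneqq W(x)^{-1}$ and $K(x)\coloneqq Y(x)W(x)^{-1}$. Since $W$ is uniformly bounded, so is $M$. Take the Schur complement of \eqref{eq:LMI_W} with respect to the positive definite block $\mathrm{diag}(R^{-1},Q)$. Then pre- and post-multiply by $M$ and use $\dot M=-M\dot W M$. Together with $\mathcal{A}=A+EK$ and $\mathcal{C}=C+DK$, this gives
\begin{equation*}
\begin{bmatrix}
\dot M + \mathcal{A}^\top M + M\mathcal{A} + \mathcal{C}^\top R\,\mathcal{C} & MB\\
B^\top M & -Q
\end{bmatrix}\leq 0 .
\end{equation*}
With the differential storage $V(x,\Delta x)=\Delta x^\top M(x)\Delta x$, this is exactly the differential dissipation inequality
\begin{equation*}
\dot V\leq \Delta w^\top Q\Delta w-\Delta z^\top R\Delta z
\end{equation*}
along $\subscr{\Delta\Sigma}{cl}$ in \eqref{eq:diff_dyn_cl}. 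I must be careful that the sign conventions of \eqref{eq:LMI_W} really produce this form. If they do not, I will rewrite the block arrangement so that it does, since this is the intended reading.

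\textbf{Step 2: construct $u_d$ and integrate along paths.} At each $t$, let $c(\cdot,t)$ be a minimizing geodesic from $x^*(t)=\col{x_1(t),x_1(t)}$ to $x(t)$ under $M$. Define $k$ by integrating the differential feedback along this geodesic:
\begin{equation*}
u_d(t)=k(x(t))=\int_0^1 K(c(s,t))\,\partial_s c(s,t)\,ds ,
\end{equation*}
anchored so that the driving input vanishes at $x^*$. Also interpolate the inputs between $w^*\coloneqq\col{u_1,u_1,d_1,0}$ and $w$. Along the family of trajectories parametrized by $s$, the tangent vectors $\partial_s$ satisfy $\subscr{\Delta\Sigma}{cl}$. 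Integrating the dissipation inequality over $s\in[0,1]$ then gives
\begin{equation*}
\tfrac{d}{dt}E(x^*(t),x(t))\leq \int_0^1\big(\Delta w^\top Q\Delta w-\Delta z^\top R\Delta z\big)\,ds .
\end{equation*}
Jensen/Cauchy--Schwarz bounds the $R$-term from below by $|z-z^*|_R^2$, where $z^*=\col{0,0}$ because $y_1-y_1=0$, so it dominates $|z|_R^2$. The $Q$-term equals $(w-w^*)^\top Q(w-w^*)$ for the linear interpolation. The difference $w-w^*=\col{0,u_2-u_1,0,d_2}$ is then bounded by $\norm{w}_Q^2$ after adjusting constants. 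The one-sided derivative of $E$ is justified by a first-variation argument, and Assumption \ref{assum:zero_measure} guarantees this holds for almost every $t$.

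\textbf{Step 3: conclude.} Integrate over $[0,T]$ for $T<T_e$, so trajectories exist, and drop $E(\cdot)\geq0$ at time $T$. The result is \eqref{eq:similarity_comp} with $b(x_0)=E(\col{x_{1,0},x_{1,0}},x_0)$. This $b$ is continuous, nonnegative, and vanishes when $x_{1,0}=x_{2,0}$, so $\Sigma_2$ is $(T_e,\gamma,\delta)$-similar to $\Sigma_1$.

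\textbf{Main obstacle.} The hard part is Step 2. Two things must be made rigorous there: that the geodesic-integrated feedback yields a well-defined $u_d\in\mathcal{L}_{2,T}$, and that $E$ is differentiable almost everywhere with the stated bound. I would handle these with the path-integral argument of Manchester--Slotine, using Assumption \ref{assum:zero_measure} for the cut locus. A second delicate point is the reference trajectory $x^*$. It is a solution of the composite system only if the second copy of $f_1$ is matched by $f_2$, which holds when $u_d$ enters so that the $\Sigma_2$ dynamics can track $\Sigma_1$. Otherwise the argument needs an additional bound comparing $f_2$ to $f_1$ along the diagonal, which I would absorb into the $Q$ and $B$ structure through the interpolation of $w$.
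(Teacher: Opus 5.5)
\textbf{Overall.} Your plan is the paper's proof. It has the same steps: the Schur complement of \eqref{eq:LMI_W} into the primal inequality \eqref{eq:LMI_M} with $K=YW^{-1}$; a driving input obtained by integrating $K$ along a minimizing geodesic between $x(t)$ and $x^*(t)=\col{x_1(t),x_1(t)}$; integration of the differential dissipation inequality over $s$ with Cauchy--Schwarz on the output term; and $b$ equal to the initial Riemannian energy. Your sign worry in Step~1 is unfounded, since $M\mathcal{W}M=-\dot M-\mathcal{A}^\top M-M\mathcal{A}$ and $(CW+DY)M=\mathcal{C}$. Interpolating the inputs between $w^*$ and $w$ is also more coherent than the paper's $\omega(t,s)=sw(t)$, which attaches input $0$ to one endpoint of the geodesic.

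\textbf{The gap.} The genuine gap is the point you defer to the end, and the remedy you suggest cannot work. The comparison argument behind $\tfrac{d}{dt}E(x^*(t),x(t))\le\int_0^1(\Delta w^\top Q\Delta w-\Delta z^\top R\Delta z)\,ds$ needs the $s=0$ end of the flowed curve to stay at $x^*(t)$, i.e.\ $\dot x^*=f(x^*,w^*,\mathbf{0})$. With your $w^*$ this means $f_2(x_1,u_1,\mathbf{0},\mathbf{0})=f_1(x_1,u_1,d_1)$. Otherwise $\tfrac{d}{dt}E$ picks up the extra boundary term
\[
-2\,\partial_s c(0,t)^\top M(x^*)\,\col{\mathbf{0},\;f_1(x_1,u_1,d_1)-f_2(x_1,u_1,\mathbf{0},\mathbf{0})}.
\]
This is a drift mismatch evaluated along $x_1(t)$, not a function of $w-w^*$, so no choice of $Q$, $B$ or input interpolation can absorb it. Similarly, $z^*=\mathbf{0}$ needs $n_1=n_2$ and $g_1=g_2$.

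\textbf{Counterexample.} Without such matching the conclusion is false. Take $\Sigma_1:\dot x_1=-x_1+u_1+d_1$ and $\Sigma_2:\dot x_2=-x_2+1+u_2+d_2+u_d$, with $y_i=x_i$.
\begin{itemize}
\item The differential dynamics do not see the offset, and \eqref{eq:LMI_W} holds with $Y=0$, $W=\tfrac12\mathbf{I}$, $\delta=\mu=8$ and any $\gamma,\eta$.
\item Now take $x_{1,0}=x_{2,0}$ (so $b=0$), $u_1=u_2=d_2=0$ and $d_1\equiv\epsilon$. The error obeys $\dot e=-e+\epsilon-1-u_d$.
\item Hence every $u_d$ gives $\norm{y_1-y_2}^2_{[0,T]}+\eta\norm{u_d}^2_{[0,T]}\ge(1-\epsilon)^2\kappa$ with $\kappa=\kappa(T,\eta)>0$.
\item But \eqref{eq:similarity_def} would bound this quantity by $\mu\epsilon^2T$, a contradiction for small $\epsilon$.
\end{itemize}

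\textbf{What is needed.} The paper's proof has exactly the same hole: it silently treats $x^*$ as a closed-loop trajectory with $z=\mathbf{0}$. So you are not missing an idea from it. Your argument, like the paper's, only closes under an added hypothesis, for example:
\begin{itemize}
\item $n_1=n_2$ and $g_1=g_2$;
\item a reference disturbance $d_2^*$ with $f_2(x,u,d_2^*,\mathbf{0})=f_1(x,u,d_1)$ and $\abs{d_2^*}\le c\abs{w}$.
\end{itemize}
This holds when $f_2(x,u,d,u_d)=f_1(x,u,d+u_d)$ and $d_2^*=d_1$, which is the setting of Proposition~\ref{prop:reflexivity}. Even there, your $w^*=\col{u_1,u_1,d_1,\mathbf{0}}$ must be replaced by $\col{u_1,u_1,d_1,d_2^*}$. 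With that hypothesis the rest of your plan goes through. Moreover, $u_d\in\mathcal{L}_{2,T}$ then follows directly from the time-integrated inequality, since $\eta\abs{u_d}^2$ is part of $\abs{z}_R^2$.
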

\begin{proof}
    Consider $u_d(t)$ of the following form:

    \begin{subequations}\label{eq:d_2_form}
        \begin{equation}\label{eq:nu}
            \nu(t) = \arg \min_{c \in \Gamma\left(x(t),x^*(t)\right)} E(c) 
        \end{equation}
        \begin{equation}\label{eq:d_2}
            u_d(t) = \int_0^1 K\left(\nu(t,s)\right)\frac{\partial \nu}{\partial s} ds,
        \end{equation}
    \end{subequations}
    with $K(x) \coloneqq Y(x)W^{-1}(x)$. The proof for the existence of such a $u_d(t)$ is analogous to that of \cite[Theorem 1]{manchester2017control} and has been omitted for brevity.

    At any time $\tau \in [0,\infty)$, consider the following smoothly parameterized paths of states $c(t,s)$, control $d(t,s)$, disturbance $\omega(t,s)$, and output $\zeta(t,s)$ for $s\in [0,1]$:
    \begin{itemize}
        \item states $c(t,s) \coloneqq \nu(t,s)$,
        \item control $d(t,s)$ determined as:
        \begin{align}
            d(t,s) \coloneqq \int_0^s K\left(c\left(t,\tilde{s}\right)\right)\frac{\partial c\left(t,\tilde{s}\right)}{\partial \tilde{s}}d\tilde{s},
        \end{align}
        \item disturbance $\omega(t,s)\coloneqq sw(t)$,
        \item outputs $\zeta(t,s) = g\left(c\left(t,s\right),d\left(t,s\right)\right)$.
    \end{itemize}

    Differentiating each of the above four paths with respect to $s$ at fixed time $t = \tau$
    \begin{subequations}\label{eq:path_diff}
    \begin{align}
        \frac{\partial c\left(t,s\right)}{\partial s} &= \frac{\partial \nu\left(t,s\right)}{\partial s},\\
        \frac{\partial d\left(t,s\right)}{\partial s} &= K\left(\nu\left(t,s\right) \right)\frac{\partial \nu\left(t,s\right)}{\partial s},\\
        \frac{\partial \omega\left(t,s\right)}{\partial s} &= w(t),\\
    \begin{split}
        \frac{\partial \zeta\left(t,s\right)}{\partial s} &= C\left(\nu\left(t,s\right),d\left(t,s\right)\right)\frac{\partial \nu\left(t,s\right)}{\partial s} \\
        &+ D\left(\nu\left(t,s\right),d\left(t,s\right)\right)\frac{\partial d\left(t,s\right)}{\partial s}. 
    \end{split}
    \end{align}
    \end{subequations}

    Let $\phi>0$ be some arbitrary small constant. For time interval $[\tau,\tau+\phi]$ and for each $s\in [0,1]$, fix $d(t,s)$ and $\omega(t,s)$ to their values at $t=\tau$. Then, by interchanging the order of differentiation of $t$ and $s$, it follows that the path derivatives $\frac{\partial c\left(t,s\right)}{\partial s}, \frac{\partial d\left(t,s\right)}{\partial s},\frac{\partial \omega\left(t,s\right)}{\partial s},$ and $\frac{\partial \zeta\left(t,s\right)}{\partial s}$ satisfy \eqref{eq:diff_dyn_cl} with the substitution $\Delta x=\frac{\partial c\left(t,s\right)}{\partial s}, \Delta u_d =\frac{\partial d\left(t,s\right)}{\partial s}, \Delta w = \frac{\partial \omega\left(t,s\right)}{\partial s},$ and $\Delta z=\frac{\partial \zeta\left(t,s\right)}{\partial s}$.

    Let $M(x) = W^{-1}(x)$. Then, by application of Schur's complement, equation \eqref{eq:LMI_W} is equivalent to 
    \begin{equation}\label{eq:LMI_M}
        \begin{bmatrix}
            \dot{M} + M\mathcal{A} + \mathcal{A}^\top M + \mathcal{C}^\top R \mathcal{C} & MB \\
            B^\top M & -Q
        \end{bmatrix} \leq 0.
    \end{equation}
    This implies
    \begin{align*}
        \frac{d}{dt}\left(\Delta x^\top M\Delta x\right) \leq -\Delta z^\top R \Delta z + \Delta w^\top Q \Delta w.
    \end{align*}

    Integrating over $s\in [0,1]$ and
    yields
    \begin{align}\label{eq:interim}
        \int_0^1 \frac{d}{dt}\left(\Delta x^\top M\Delta x\right) ds \leq & -\int_0^1 \Delta z^\top R \Delta z ds \nonumber\\
        &+ \int_0^1\Delta w^\top Q \Delta w ds.
    \end{align}
    Noting from equation \eqref{eq:path_diff} that $\Delta w = \frac{\partial \omega\left(t,s\right)}{\partial s} = w(t)$ and that $z^\top Rz = \left(\int_0^1 \frac{\partial \zeta\left(t,s\right)}{\partial s}ds\right)^\top R\left(\int_0^1 \frac{\partial \zeta\left(t,s\right)}{\partial s}ds\right)\leq \int_0^1  \left(\frac{\partial \zeta\left(t,s\right)}{\partial s} \right)^\top R\left( \frac{\partial \zeta\left(t,s\right)}{\partial s}\right)ds$ (from Cauchy-Schwarz inequality) yields
    \begin{align}\label{eq:interim_1}
        \int_0^1 \frac{d}{dt}\left(\Delta x^\top M\Delta x\right) ds \leq &  -z^\top R z +  w^\top Q w.
    \end{align}

    Given the assumption that the dynamics $f$ and perturbation input $u_d$ are smooth, for all sufficiently small $\epsilon>0$, there exists a path $c(t)$ with 
    \begin{align*}
        E\left(c(\tau+\epsilon)\right) - E\left(c(\tau)\right)\leq \int_{\tau}^{\tau+\epsilon} \left(-z^\top R z +  w^\top Q w\right)dt.
    \end{align*}

    Since $E\left(x(t),x^*(t)\right)$ is the minimum energy of a path, $E\left(x(t),x^*(t)\right) \leq E\left(c(t)\right)$ for $t\in [\tau,\tau+\epsilon)$. Further, $E\left(x(t),x^*(t)\right) = E\left(c(t)\right)$ on $t=\tau$. Since $\tau$ is arbitrary and taking $\epsilon\to 0$,
    it follows that for all $t$
    \begin{align*}
        \frac{d}{dt}E\left(x(t),x^*(t)\right) \leq -z^\top R z +  w^\top Q w.
    \end{align*}
    Integrating over time and using that $E\left(x(t),x^*(t)\right)\geq 0$ establishes the claim with $b(x,x^*) = E\left(x(t),x^*(t)\right)$.

    We now establish that $u_d(t) \in \mathcal{L}_{2,T}$. 
    For any $0<T<T_e$, since $Y(x)$ is smooth and $M(x)=W^{-1}(x)$ is a uniformly bounded dual metric, it follows that there exists a constant $\bar{K}<\infty$ such that $K(x)\leq \bar{K}$. 
    This implies
    \begin{align*}
    \abs{u_d(t)} &\leq \bar{K} \int_0^1 \left(\frac{\partial \nu}{\partial s}\right)^\top  \left(\frac{\partial \nu}{\partial s}\right)ds,\\
        & \leq \frac{\bar{K}}{\alpha_1} \int_0^1 \left(\frac{\partial \nu}{\partial s}\right)^\top M \left(\frac{\partial \nu}{\partial s}\right)ds = \frac{\bar{K}}{\alpha_1} E\left(\nu(t) \right).
    \end{align*}
    where for the second inequality we used the fact that $M(x)$ is uniformly bounded, i.e., $\alpha_1 \mathbf{I}\leq M(x) \leq \alpha_2\mathbf{I}$ for some $\alpha_2\geq \alpha_1> 0$.

    Interchanging the differentiation and integration in equation \eqref{eq:interim_1} and since $\Delta x = \frac{\partial c}{\partial s} = \frac{\partial \nu}{\partial s}$, we obtain

    \begin{align*}
        \frac{d}{dt} E(\nu(t)) \leq & \int_0^1 w^\top Q w ds
        \implies E(\nu(t)) \leq \norm{w}_{Q,T}^2.
    \end{align*}
    Further, since $w\in \mathcal{L}_{2,T}$, there exists a constant $\bar{w}>0$ such that $\norm{w}_{Q,T}^2\leq \bar{w}$. Then, for any $T>0$
    
    \begin{align*}
        \abs{u_d(t)} \leq \frac{\bar{K}}{\alpha_1}\norm{w}_{Q,T}^2\leq \frac{\bar{K}\bar{w}}{\alpha_1}\\
        \implies \norm{u_d(t)} \leq \frac{T\bar{K}\bar{w}}{\alpha_1} < \infty.
    \end{align*}
    This concludes the proof.
\end{proof}

Theorem \ref{thm:LMI_characterization} provides a sufficient condition for $(T_e,\gamma,\delta)$-similarity for nonlinear systems in terms of a feasibility problem of a pointwise LMI in $W$ and $Y$. In other words, \eqref{eq:LMI_W} is convex but infinite dimensional. One may use tools such as Sum-of-Squares programming to solve equation \eqref{eq:LMI_W} efficiently.
Further, Theorem \ref{thm:LMI_characterization} requires Assumption \ref{assum:zero_measure} which can be challenging to verify. However, it might be possible to remove this assumption by replacing the derivative of the Riemannian energy, in the proof, with its Dini derivative \cite{zhao2022tube, singh2023robust}. 


Theorem \ref{thm:LMI_characterization} is motivated by the recent advancements towards reducing the $L_2$ gain of nonlinear systems by designing feedback control \cite{manchester2018robust}. Indeed, one may regard $u_d$ as an external control input and $w$ as an external disturbance in the composite system such that \eqref{eq:similarity_comp} is analogous to the $H_{\infty}$ control problem.
However and as also stated in \cite{pirastehzad2024comparison}, the problem differs in that it seeks the existence of the constants $\gamma,\delta,\mu,$ and $\eta$ as opposed to be specified a priori.
The LMI in equation \eqref{eq:LMI_W} may depend on $u_d(t)$ and $w(t)$ due to the presence of terms $A$ and $\dot{W}$. Dependence on $w(t)$ is not restrictive as a bound on the $w$ can be pre-established and incorporated in equation \eqref{eq:LMI_W}. Analogous to \cite{zhao2022tube,manchester2017control}, one may assume the following to remove the dependence on $u_d(t)$:
\begin{enumerate}
    \item System $\Sigma_2$ is affine in $u_d(t)$, i.e., of the form 
    \begin{equation*}
        \Sigma_2: \begin{cases}
            \dot{x_2} &= f_2(x_2,u_2,d_2) + E_d(x_2)u_d,\\
            y_2 &= g_2(x_2),
        \end{cases}
    \end{equation*}
    and
    \item for each $i\in\{1,\dots,q_d\}$, $\partial_{e_i}W-\frac{\partial e_i}{\partial x}W - W\left(\frac{\partial e_i}{\partial x}\right)^\top = 0$, where $e_i$ denotes the $i$th column of $E_d(x)$.
\end{enumerate}

Requiring $\Sigma_2$ to be affine in $u_d(t)$ is not restrictive. Indeed, control-affine systems is a broad class of systems and even if there is a nonlinear dependence on $u_d(t)$, in many cases such systems can be transformed into a control-affine system \cite{chen1999model}. Additionally, since
we aim to design $u_d(t)$, requiring $u_d(t)$ to act in an affine manner is a design choice.  
The second condition states that $e_i$ is a Killing vector for the metric $W(x)$, i.e., if $E_d$ is in the form of $\begin{bmatrix}
    \mathbf{0}, \mathbf{I}_{q_d'\times q_d'}
\end{bmatrix}^\top$, then the second condition requires that $W(x)$ must not depend on the last $q_d'$ state components.

\section{Necessary and Sufficient Conditions}\label{sec:conditions}

Although Theorem \ref{thm:LMI_characterization} provides an LMI characterization of the notion of $(T_e,\gamma,\delta)$-similarity, it does not provide insights on when equation \eqref{eq:LMI_W} is feasible. Conditions ensuring feasibility of equation \eqref{eq:LMI_W} are certainly desirable as they provide insights towards the class of systems that are $(T_e,\gamma,\delta)$-similar. To this end, in Theorem \ref{thm:necessary} and Theorem \ref{thm:sufficient}, we will characterize a necessary and a sufficient condition, respectively, on the feasibility of equation \eqref{eq:LMI_W}.

\begin{theorem}\label{thm:necessary}
    Suppose equation \eqref{eq:LMI_W} is feasible. Then, there exists a matrix $W_{11}(x)\succ 0$ that satisfies
    \begin{align}\label{eq:necessary}
        \dot{W}_{11}(x) - A_1W_{11}(x) - W_{11}(x)A_1^\top \geq 0,
    \end{align}
    where $A_1 = \frac{\partial f_1}{\partial x_1}$. 
\end{theorem}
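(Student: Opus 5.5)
The plan is to extract the condition from the top-left diagonal block of \eqref{eq:LMI_W}. A positive semidefinite block matrix has positive semidefinite principal submatrices, so feasibility of \eqref{eq:LMI_W} gives $\mathcal{W} = \dot{W} - WA^\top - AW - EY - Y^\top E^\top \geq 0$. We then partition $W$ conformally with $x = \col{x_1,x_2}$ as
\begin{equation*}
W = \begin{bmatrix} W_{11} & W_{12}\\ W_{12}^\top & W_{22}\end{bmatrix},
\end{equation*}
and take the $(1,1)$ block of $\mathcal{W}$. This block is again a principal submatrix, hence positive semidefinite.

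Next we compute that block using the structure of $A$ and $E$. Since $A = \mathrm{diag}(A_1,A_2)$, the $(1,1)$ block of $AW + WA^\top$ is $A_1W_{11} + W_{11}A_1^\top$, and the $(1,1)$ block of $\dot W$ is $\dot W_{11}$. By \eqref{eq:E}, $E = \col{\mathbf{0},E_d}$ has a zero first block row, so $EY$ has a zero first block row and $Y^\top E^\top$ has a zero first block column. Both therefore contribute nothing to the $(1,1)$ block.

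Combining these, the $(1,1)$ block of $\mathcal{W}$ is exactly $\dot W_{11} - A_1W_{11} - W_{11}A_1^\top$, which is therefore $\geq 0$. This is \eqref{eq:necessary}. Positivity $W_{11}\succ 0$ follows because $W$ is a uniformly bounded metric, so $W \succeq \alpha\mathbf{I}$ for some $\alpha>0$, and principal blocks inherit this bound.

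The one point needing care is the dependence on $A_1$. The $A_1$ appearing in $A$ is evaluated along $(x_1,u_1,d_1)$ and does not involve $x_2$ or $u_d$. However, $W_{11}$ and $\dot W_{11}$ may still depend on the full state $x$, and $\dot W_{11}$ is a derivative along the composite trajectory. I expect to state the result with $W_{11}(x)$, as the theorem does, rather than claim $W_{11}$ depends only on $x_1$. With that reading no further argument is needed; this bookkeeping is the only real obstacle.
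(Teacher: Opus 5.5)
Your proposal is correct and follows essentially the same route as the paper. You restrict \eqref{eq:LMI_W} to the principal block $\mathcal{W}\geq 0$, partition $W$ conformally, and read off \eqref{eq:necessary} from the $(1,1)$ block, using that $A$ is block-diagonal and that $E$ has a zero first block row (so $EY$ and $Y^\top E^\top$ contribute nothing there); your explicit justification of $W_{11}\succ 0$ from the lower bound on $W$ fills in a step the paper leaves implicit.
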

\begin{proof}
    If equation \eqref{eq:LMI_W} holds, then that implies that $\mathcal{W}\geq 0$, where $\mathcal{W} = \dot{W} -WA^\top -AW-EY -Y^\top E^\top$. 
    From equation \eqref{eq:E}, 
    \begin{align*}
        EY = \begin{bmatrix}
            \mathbf{0} & \mathbf{0}\\
            E_dY_1 & E_dY_2
        \end{bmatrix},
    \end{align*} 
    where $Y = \begin{bmatrix}
        Y_1 & Y_2
    \end{bmatrix}$. Let $W(x) = \begin{bmatrix}
        W_{11} & W_{12}\\
        W_{12}^\top & W_{22}
    \end{bmatrix}$.
    Then, the $(1,1)$ entry, i.e., the element in the first row and first column, of $\mathcal{W}$ is $\dot{W}_{11}(x) - A_1W_{11}(x) - W_{11}(x)A_1^\top$. Since $\mathcal{W}\ge 0$, $\dot{W}_{11}(x) - A_1W_{11} - W_{11}A_1^\top\ge 0$ must hold.
\end{proof}

Equation \eqref{eq:necessary} implies that system $\Sigma_1$ must be incrementally stable \cite{forni2013differential} under zero input $u_1$ and disturbance $d_1$. 
Intuitively, equation \eqref{eq:necessary} is satisfied by a class of systems for which the infinitesimal displacement at fixed time between two neighboring trajectories of a system in that class is non-increasing.
Requiring $\Sigma_1$ to satisfy equation \eqref{eq:necessary} is not restrictive as there are many systems (for e.g. contractive systems \cite{davydov2024perspectives}) that satisfy equation \eqref{eq:necessary}.
For LTI systems, equation \eqref{eq:necessary} implies that system $\Sigma_1$ must be $0$-asymptotically stable.

Theorem \ref{thm:necessary} establishes a necessary condition for the feasibility of equation \eqref{eq:LMI_W} and, as a consequence, a necessary condition for $(T_e,\gamma,\delta)$-similarity (under the assumptions required for Theorem \ref{thm:LMI_characterization}). Theorem \ref{thm:necessary} is valuable as it characterizes a property that $\Sigma_1$ must satisfy such that $\Sigma_2$ is $(T_e,\gamma,\delta)$-similar to $\Sigma_1$. However, it does not characterize any condition for system $\Sigma_2$. The next result provides a sufficient condition which depends on both system $\Sigma_1$ and $\Sigma_2$ for the feasibility of equation \eqref{eq:LMI_W}.

\begin{theorem}\label{thm:sufficient}
    Suppose that, for some $\lambda>0$, the composite system  \eqref{eq:dynamics_composite} satisfies 
    \begin{equation}\label{eq:CCM}
    \dot{W} - WA^\top - AW - EY -Y^\top E^\top > 2\lambda W.
\end{equation}
    Then, on any compact subset of $x, w, u_d$, system $\Sigma_2$ is $(T_e,\gamma,\delta)$-similar to system $\Sigma_1$.
\end{theorem}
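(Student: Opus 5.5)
The plan is to reduce the statement to Theorem~\ref{thm:LMI_characterization}. We will show that, on a compact set $\mathcal{K}$ of $(x,w,u_d)$, the strict inequality \eqref{eq:CCM} lets us choose the constants $\gamma,\delta,\mu,\eta$ so that the LMI \eqref{eq:LMI_W} is feasible with the same $W$ and $Y$. Theorem~\ref{thm:LMI_characterization} then yields $(T_e,\gamma,\delta)$-similarity. We take $W$ and $Y$ from \eqref{eq:CCM}, with $W$ uniformly bounded as in Theorem~\ref{thm:LMI_characterization}, and we keep Assumptions~\ref{assum:d_2_feedback} and~\ref{assum:zero_measure} in force. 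The key observation is that in Definition~\ref{def:similarity} and Theorem~\ref{thm:LMI_characterization} the constants are not prescribed a priori. We may therefore make $Q$ large and $R^{-1}$ large, i.e., take $\gamma,\delta,\mu$ large and $\eta$ small.

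\textbf{Step 1 (Schur complement).} Since $Q\succ 0$ and $R^{-1}\succ 0$, the LMI \eqref{eq:LMI_W} is equivalent to
\begin{equation*}
\mathcal{W} - (CW+DY)^\top R (CW+DY) - B Q^{-1} B^\top \geq 0 .
\end{equation*}
Inequality \eqref{eq:CCM} gives $\mathcal{W} > 2\lambda W \geq 2\lambda\alpha \mathbf{I}$, where $\alpha>0$ is the lower bound of $W$. It therefore suffices to show that
\begin{equation*}
(CW+DY)^\top R (CW+DY) + BQ^{-1}B^\top \leq 2\lambda\alpha\mathbf{I}
\end{equation*}
uniformly on $\mathcal{K}$.

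\textbf{Step 2 (uniform bounds).} The functions $f_i,g_i$ are $\mathcal{C}^1$, so $A,B,C,E$ are continuous. We also take $W$ and $Y$ continuous. On the compact set $\mathcal{K}$ there is therefore a constant $\kappa$ with $\|B\|\le\kappa$, $\|CW\|\le \kappa$ and $\|Y\|\le\kappa$.

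\textbf{Step 3 (choice of constants).} Choose $\gamma=\delta'$ with any $\delta'>0$, and pick $\delta,\mu$ large enough that $\lambda_{\min}(Q)=\min\{\delta,\mu\}\ge \kappa^2/(\lambda\alpha)$. Then $BQ^{-1}B^\top\le \lambda\alpha\mathbf{I}$. Here we use that the eigenvalues of the $u$-block of $Q$ are $\delta$ and $2\gamma+\delta$, so $Q\succeq\min\{\delta,\mu\}\mathbf{I}$.

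\textbf{Step 4 (the obstacle: the $R$-term).} This step is the main difficulty. We write
\begin{equation*}
(CW+DY)^\top R(CW+DY) = (C_{\text{top}}W)^\top(C_{\text{top}}W) + \eta\, Y^\top Y ,
\end{equation*}
where $C_{\text{top}}=[\,C_1\ \ -C_2\,]$. The $\eta$-part can be made small by choosing $\eta$ small. The output-error weight, however, is fixed at $\mathbf{I}$ in $R$, so $\|C_{\text{top}}W\|^2$ cannot be shrunk by choosing constants.

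The first remedy to try is rescaling. Replacing $(W,Y)$ by $(\sigma W,\sigma Y)$ with $\sigma>0$ preserves \eqref{eq:CCM}, because that inequality is homogeneous in $(W,Y)$. The left-hand side of the sufficient inequality in Step 1 then scales like $\sigma^2$, while the right-hand side $2\lambda\sigma\alpha$ scales like $\sigma$. Taking $\sigma$ small makes the quadratic term dominated, and Step 3 can be redone with the rescaled bounds. The rescaled $W$ remains uniformly bounded, so all hypotheses of Theorem~\ref{thm:LMI_characterization} hold.

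With $\sigma$, $\eta$, $\delta$ and $\mu$ chosen in this order, the LMI \eqref{eq:LMI_W} holds on $\mathcal{K}$. Theorem~\ref{thm:LMI_characterization} then gives the claim for trajectories remaining in $\mathcal{K}$, which is where the compactness restriction in the statement enters.

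One remaining subtlety needs checking: $\dot W$ depends on $(x,w,u_d)$ through the dynamics. That dependence is already absorbed into \eqref{eq:CCM} and is bounded on $\mathcal{K}$, so it causes no difficulty.
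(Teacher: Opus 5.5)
Your proof is correct. Its skeleton matches the paper's: reduce \eqref{eq:LMI_W} by a Schur complement to $\mathcal{W}\ge (CW+DY)^\top R(CW+DY)+BQ^{-1}B^\top$, use the slack $2\lambda W$ from \eqref{eq:CCM}, tune the constants, and invoke Theorem~\ref{thm:LMI_characterization}.

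The one substantive difference is your Step~4, and there you are more careful than the paper. The paper simply asserts that a small $\eta$ and large $\gamma,\delta,\mu$ make $2\lambda W$ dominate. As you note, the output-error block of $R$ in \eqref{eq:Q_and_R} is fixed at $\mathbf{I}$, so the term $WC_{\mathrm{top}}^\top C_{\mathrm{top}}W$ is untouched by the constants. The paper's step therefore fails for the given $W$ whenever $2\lambda W - WC_{\mathrm{top}}^\top C_{\mathrm{top}}W$ is not positive definite.

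Your rescaling $(W,Y)\mapsto(\sigma W,\sigma Y)$ is exactly the missing ingredient. It preserves \eqref{eq:CCM}, uniform boundedness, and the gain $K=YW^{-1}$. The minimizing geodesics are also unchanged, so the driving input \eqref{eq:d_2_form} is literally the same. Meanwhile the $R$-term shrinks like $\sigma^2$ against a slack that is linear in $\sigma$. The cost falls only on the certificate: $b$ becomes $\sigma^{-1}$ times the original Riemannian energy, and $\min\{\delta,\mu\}$ must grow like $\sigma^{-1}$. Both are permitted by Definition~\ref{def:similarity}.

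Two small points. First, your sentence that the left-hand side ``scales like $\sigma^2$'' is true only of the $R$-term; $BQ^{-1}B^\top$ does not scale at all. That is why $\delta,\mu$ must be fixed after $\sigma$, and your stated ordering handles this correctly. Second, as in the paper, the compactness clause is left informal. The proof of Theorem~\ref{thm:LMI_characterization} evaluates the LMI at $(\nu(t,s),sw(t),d(t,s))$ along the whole geodesic family, so $\mathcal{K}$ must contain those points, not just the trajectories.
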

\begin{proof}
    Since $\lambda>0$ and $W$ is bounded, there exists sufficiently small $\eta$ and sufficiently large $\gamma,\delta, \mu$ such that
    $2\lambda W > \mathcal{C}^\top R\mathcal{C} + BQ^{-1}B^\top$ holds. Using equation \eqref{eq:CCM},
    \begin{align*}
        \dot{W} - WA^\top - AW - EY -Y^\top E^\top > \mathcal{C}^\top R\mathcal{C} + BQ^{-1}B^\top.
    \end{align*}
    The claim then follows upon applying Schur's complement twice and then using Theorem \ref{thm:LMI_characterization}.
\end{proof}

Given that two nonlinear systems characterized in \eqref{eq:dynamics_1} and \eqref{eq:dynamics_2} are similar, in the sense of Definition \ref{def:similarity}, a natural question is whether their differential dynamics are also similar, with the notion of similarity defined analogously to that in Definition \ref{def:similarity}. 
In the next section, we will establish that the answer to this question is in the affirmative. We will also establish that similarity, that we will define formally shortly, of differential dynamics implies similarity of their respective nonlinear systems, in the sense of Definition \ref{def:similarity}.

\section{Properties of \texorpdfstring{$(T_e,\gamma,\delta)$}{(Te,\gamma,\delta)}-Similarity}\label{sec:properties}

In this section, we provide an affirmative answer to the question of whether $(T_e,\gamma,\delta)$-similarity of nonlinear systems imply analogous similarity between their respective differential dynamics.
We begin by introducing a notion of $(T_e,\gamma,\delta)$-similarity between the differential dynamics $\Delta\Sigma_i$ of $\Sigma_i$, for $i\in\{1,2\}$.

\begin{definition}[Differential $(T_e,\gamma,\delta)$-Similarity]\label{def:diff_similarity}
    Given differential dynamics $\Delta\Sigma_1$ and $\Delta\Sigma_2$ of systems $\Sigma_1$ and $\Sigma_2$, respectively, system $\Delta\Sigma_2$ is said to be differentially $(T_e,\gamma,\delta)$-similar to $\Delta\Sigma_1$ if there exist positive constants $\gamma,\delta,\mu,\eta$ such that for every  $\Delta u_1,\Delta u_2,\Delta d_1, \Delta d_2 \in \mathcal{L}_{2,T}$ and every $\Delta x_{i,0}$, $i\in \{1,2\}$, there exist a driving input $\Delta u_d\in \mathcal{L}_{2,T}$ such that, for all finite $T$ satisfying $0<T<T_e$
    \begin{equation}\label{eq:similarity_def_diff}
    \begin{split}
        &\norm{\Delta y_1- \Delta y_2}^2_{[0,T]} \leq \gamma\norm{\Delta u_1-\Delta u_2}^2_{[0,T]}\\
        &+ \delta \norm{\col{\Delta u_1,\Delta u_2}}^2_{[0,T]} + \mu\norm{\col{\Delta d_1,\Delta d_2}}^2_{[0,T]} \\
        &-\eta\norm{\Delta u_d}^2_{[0,T]}+ \Delta b\left(x_{0},\Delta x_{1,0},\Delta x_{2,0}\right),
    \end{split}
    \end{equation}
    for some finite-valued continuous function $\Delta b\left(x_{1,0},x_{2,0},\Delta x_{1,0},\Delta x_{2,0}\right)\ge 0$ and $\Delta b\left(x_{0},\mathbf{0},\mathbf{0}\right) = 0$ for all $x_{1,0},x_{2,0}$.
\end{definition}

Analogously to that for \eqref{eq:similarity_def}, we assume that the initial conditions $\Delta x_0 = \col{\Delta x_{1,0},\Delta x_{2,0}}$, $x_0$ and the signal $\Delta w = \col{\Delta u_1, \Delta u_2, \Delta d_1}$ are such that $\Delta b(x_0,\Delta x_0)=0$ and $\Delta w=0$ do not jointly hold.
Intuitively, the notion of differential $(T_e,\gamma,\delta)$-similarity characterizes the similarity between the variation in the outputs of system $\Sigma_1$ and $\Sigma_2$ by accounting for the variation in the inputs $u_1(t)$, $u_2(t)$, and the variation in the disturbances $d_1(t), d_2(t)$. Formally, the notion of differential $(T_e,\gamma,\delta)$-similarity characterizes the similarity of $\Delta\Sigma_1$ and $\Delta\Sigma_2$ in terms of variations in the input-output behavior of $\Sigma_1$ and $\Sigma_2$. The constant $\gamma$ characterizes the effect of variation on the dissimilar inputs and the constant $\delta$ characterizes the effect of individual input variation. The constant $\mu$ characterizes the effect of variation on the disturbances $d_1$ and $d_2$.

We begin by establishing that $(T_e,\gamma,\delta)$-similarity can be completely characterized by dissipativity of the differential composite system $\Delta \Sigma$ defined in equation \eqref{eq:diff_dyn_composite}.

\begin{lemma}\label{lem:diss_characterization}
    Suppose Assumptions \ref{assum:d_2_feedback} and \ref{assum:zero_measure} hold. Then, system $\Sigma_2$ is $(T_e,\gamma,\delta)$-similar to system $\Sigma_1$ if and only if system $\Delta\Sigma$ is dissipative with respect to the supply rate 
    \begin{align}\label{eq:supply_rate}
        s_{\Delta}\left(\Delta w(t),\Delta z(t)\right) = \begin{bmatrix}
            \Delta w\\
            \Delta z
        \end{bmatrix}^{\top}\begin{bmatrix}
            Q & \mathbf{0}\\
            \mathbf{0} & -R
        \end{bmatrix}\begin{bmatrix}
            \Delta w\\
            \Delta z
        \end{bmatrix},
    \end{align}
    i.e., there exists a function $V_{\Delta}: \mathbb{R}^n \times \mathbb{R}^n \to [0,\infty)$ such that, for all $t_1\leq t_2$
    \begin{multline}\label{eq:diff_diss}
        V_{\Delta}\left(x(t_2),\Delta x(t_2)\right)-V_{\Delta}\left(x(t_1),\Delta x(t_1)\right)\\
        \leq \int_{t_1}^{t_2}s_{\Delta}\left(\Delta w(t),\Delta z(t)\right)dt.
    \end{multline}
    Moreover, $V_{\Delta} = \Delta x^{\top} M(x)\Delta x$, where $M(x)$ satisfies equation \eqref{eq:LMI_M}.
\end{lemma}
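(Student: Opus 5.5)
The plan is to prove the two implications separately, with the quadratic storage $V_{\Delta}=\Delta x^\top M(x)\Delta x$ linking them. In both directions we work with the closed-loop differential system $\subscr{\Delta\Sigma}{cl}$ in \eqref{eq:diff_dyn_cl}, obtained by substituting $\Delta u_d=K\Delta x$ with $u_d=k(x)$ from Assumption \ref{assum:d_2_feedback}.

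\emph{Sufficiency (dissipativity $\Rightarrow$ similarity).} Assume $V_{\Delta}=\Delta x^\top M(x)\Delta x$ satisfies \eqref{eq:diff_diss}. Dividing by $t_2-t_1$ and letting $t_2\to t_1$ gives the pointwise inequality
\[
\frac{d}{dt}\bigl(\Delta x^\top M\Delta x\bigr)\le \Delta w^\top Q\Delta w-\Delta z^\top R\Delta z
\]
along every trajectory of $\subscr{\Delta\Sigma}{cl}$. Since $\Delta x$ and $\Delta w$ are arbitrary, this is exactly the quadratic form in \eqref{eq:LMI_M}. The rest repeats the proof of Theorem \ref{thm:LMI_characterization}:
\begin{itemize}
\item take $\nu(t,\cdot)$ to be the geodesic from $x(t)$ to $x^*(t)$ and use the parameterized paths $c,d,\omega,\zeta$ from that proof;
\item integrate the differential inequality over $s\in[0,1]$ and apply Cauchy--Schwarz to obtain \eqref{eq:interim_1};
\item use Assumption \ref{assum:zero_measure} to pass to $\frac{d}{dt}E(x(t),x^*(t))\le w^\top Qw-z^\top Rz$ for almost every $t$;
\item integrate over $[0,T]$ with $T<T_e$ to recover \eqref{eq:similarity_comp}, with $b(x_0)=E(x_0,x_0^*)$.
\end{itemize}
This $b$ is continuous, nonnegative, and vanishes when $x_{1,0}=x_{2,0}$. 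The argument that $u_d\in\mathcal{L}_{2,T}$ is also unchanged. To make \eqref{eq:diff_diss} usable for this, I will state at the outset that $V_{\Delta}$ is taken in the quadratic form asserted in the lemma, with $M$ uniformly bounded.

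\emph{Necessity (similarity $\Rightarrow$ dissipativity).} Assume \eqref{eq:similarity_comp} holds under the feedback $u_d=k(x)$. Fix a nominal trajectory $(x,w)$ and perturb it along a smooth family $w+\varepsilon\Delta w$ with initial state $x_0+\varepsilon\Delta x_0$. Apply \eqref{eq:similarity_comp} to the pair of trajectories generated by the nominal and perturbed data. The natural way to do this is to rewrite the similarity inequality incrementally, as dissipation of $\norm{z-z'}^2$ against $\norm{w-w'}^2$ plus a term $b(x_0,x_0')$. Divide by $\varepsilon^2$ and let $\varepsilon\to0$; since $f$ and $g$ are $\mathcal{C}^1$, the first-order expansions give
\[
\norm{\Delta z}_{R,T}^2\le\norm{\Delta w}_{Q,T}^2+\Delta b,
\qquad
\Delta b=\lim_{\varepsilon\to 0}\frac{b}{\varepsilon^2}.
\]
We then define the available storage
\[
V_{\Delta}(x,\Delta x)=\sup_{\Delta w,\,T}\int_0^T\bigl(\Delta z^\top R\Delta z-\Delta w^\top Q\Delta w\bigr)\,dt,
\]
which is finite by the bound above and nonnegative. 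By the standard Willems argument it satisfies \eqref{eq:diff_diss}. Because $\subscr{\Delta\Sigma}{cl}$ is linear in $(\Delta x,\Delta w)$ for a fixed base trajectory, the supremum is homogeneous of degree two and is in fact a quadratic form $\Delta x^\top M(x)\Delta x$ (a finite-horizon Riccati value function). Differentiating \eqref{eq:diff_diss} then shows that $M$ satisfies \eqref{eq:LMI_M}.

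\emph{Main obstacle.} The hard step is the necessity direction, specifically linearizing the similarity inequality. Definition \ref{def:similarity} compares $\Sigma_1$ with $\Sigma_2$ rather than two perturbed copies of the composite system, so the incremental form is not immediate. Moreover, $b$ must be twice differentiable at the diagonal with vanishing first-order term for the limit $\Delta b$ to exist. I would first handle this by exploiting $b(x,x)=0$ and $b\ge 0$, which make the diagonal a minimum of $b$ and so force its gradient to vanish there. I would also choose the perturbation so that $x^*$ is preserved, which keeps the comparison within the structure of Definition \ref{def:similarity}.
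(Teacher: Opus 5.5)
\textbf{Sufficiency matches the paper.} Your sufficiency direction is the paper's argument: differentiate \eqref{eq:diff_diss} with the quadratic storage to get \eqref{eq:LMI_M}, then invoke Theorem~\ref{thm:LMI_characterization}. Stating explicitly that $M$ must be uniformly bounded and invertible is a useful addition that the paper leaves implicit.

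\textbf{Necessity fails at the step you flag.} Inequality \eqref{eq:similarity_comp} bounds a \emph{single} composite trajectory against $z=0$. It does not give $\norm{z-z'}_{R,T}^2\le\norm{w-w'}_{Q,T}^2+b(x_0,x_0')$. The two separate inequalities only give $\norm{z-z'}^2_{R,T}\le 2\norm{w}^2_{Q,T}+2\norm{w'}^2_{Q,T}+2b(x_0)+2b(x_0')$. Freezing $x^*$ does not help, because each trajectory is still compared with $0$, not with the other. Perturbing a non-incremental inequality constrains the second variation only at base data where it holds with equality. An example is $w=0$, $b(x_0)=0$, which forces $z\equiv0$, and which the well-posedness convention after Definition~\ref{def:similarity} actually excludes. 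Even there, it constrains the second variation only if $b$ is $\mathcal{C}^2$, which is not assumed. It says nothing about the variational system along arbitrary $(x,w,u_d)$, which is what \eqref{eq:diff_diss} and the pointwise \eqref{eq:LMI_M} require.

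The available-storage step has its own problem. The coefficients of $\subscr{\Delta\Sigma}{cl}$ depend on the future base input $w$, so $V_\Delta(x,\Delta x)$ must also take a supremum over $w$. A supremum of quadratic forms is degree-two homogeneous but in general neither quadratic nor differentiable. So neither $V_\Delta=\Delta x^\top M(x)\Delta x$ nor the differentiation yielding \eqref{eq:LMI_M} follows.

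\textbf{The paper does not close this gap either.} Its proof of this direction simply asserts that similarity implies \eqref{eq:LMI_M}, i.e., it uses a converse of Theorem~\ref{thm:LMI_characterization} that is never proved. In fact the ``only if'' claim fails without extra hypotheses. Take $\dot x_1=f(x_1)+u_1+d_1$ and $\dot x_2=f(x_2)+u_2+d_2+u_d$, with $y_i=x_i$, $f(x)=x+\sin x$, and $u_d=f(x_1)-f(x_2)+\kappa(x_1-x_2)$ for some $\kappa>0$. Assumptions~\ref{assum:d_2_feedback} and \ref{assum:zero_measure} hold, using the Euclidean metric. Then $e=x_1-x_2$ satisfies $\dot e=-\kappa e+u_1-u_2+d_1-d_2$ and $\abs{u_d}\le(2+\kappa)\abs{e}$. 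Hence Definition~\ref{def:similarity} holds with $T_e=\infty$ and $b=c\,(x_{1,0}-x_{2,0})^2$.

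Now take the base solution $x_1\equiv0$, $x_2\equiv\pi$, obtained with $u_1=0$, $u_2=\kappa\pi$, $d_i=0$, and set $\Delta w=0$. Then $\Delta x_1=e^{2t}\Delta x_{1,0}$, $\Delta e=e^{-\kappa t}\Delta e_0$, and $\Delta u_d=2\Delta x_1+\kappa\Delta e$. So $\int_0^T\Delta z^\top R\Delta z\,dt$ grows like $e^{4T}$, and no finite $V_\Delta\ge0$ satisfies \eqref{eq:diff_diss}.

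The same happens for every $\mathcal{C}^1$ feedback $k$. Hold $x_2\equiv\pi$ with a constant $u_2$. Since $f'(\pi)=0$, $\Delta\dot{x}_2=\Delta u_d$ must be square integrable. Then $\Delta x_2$ grows at most like $\sqrt{t}$ and cannot track $\Delta x_1=e^{2t}$.

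Necessity therefore needs an additional hypothesis, for example an incremental version of Definition~\ref{def:similarity}, or restricting to similarity certified through \eqref{eq:LMI_W}.
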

\begin{proof}
    We begin by establishing that if system $\Sigma_2$ is $(T_e,\gamma,\delta)$-similar to system $\Sigma_1$, then $\Delta \Sigma$ is dissipative.
    Since system $\Sigma_2$ is $(T_e,\gamma,\delta)$-similar to system $\Sigma_1$, equation \eqref{eq:LMI_M} holds. Pre- and post-multiplying \eqref{eq:LMI_M} with $\col{\Delta w,\Delta z}^\top$ yields
    \begin{multline*}
        \Delta x^\top \dot{M}\Delta x + 2\Delta xM(A\Delta x+B\Delta w) \\
        \leq \Delta w^\top Q\Delta w - \Delta x^\top C^\top RC\Delta x\\
        = \frac{d}{dt}\left(\Delta x^\top M\Delta x\right) \leq \begin{bmatrix}
            \Delta w & \Delta z
        \end{bmatrix}\begin{bmatrix}
            Q & 0\\
            0 & -R
        \end{bmatrix}\begin{bmatrix}
            \Delta w \\ \Delta z
        \end{bmatrix}.
    \end{multline*}
    Integrating with respect to time from $t_1$ to $t_2$ yields equation \eqref{eq:diff_diss}.

    We now show that if $\Delta \Sigma$ is dissipative then, system $\Sigma_2$ is $(T_e,\gamma,\delta)$-similar to $\Sigma_1$.
    Differentiating equation \eqref{eq:diff_diss} with respect to time yields
    \begin{align*}
        \begin{bmatrix}
            \Delta w & \Delta z
        \end{bmatrix}\begin{bmatrix}
            \dot{M} + M\mathcal{A} + \mathcal{A}^\top M + \mathcal{C}^\top R \mathcal{C} & MB \\
            B^\top M & -Q
        \end{bmatrix}\begin{bmatrix}
            \Delta w \\ \Delta z
        \end{bmatrix} \leq 0.
    \end{align*}
    This implies that equation \eqref{eq:LMI_M} holds. The claim then follows from Theorem \ref{thm:LMI_characterization} and the fact that equation \eqref{eq:LMI_M} and equation \eqref{eq:LMI_W} are equivalent.
\end{proof}

From Lemma \ref{lem:diss_characterization}, the notion of $(T_e,\gamma,\delta)$-similarity for nonlinear systems is completely characterized by the notion of differential dissipativity of the composite system. In other words, if system $\Sigma_1$ and the system $\Sigma_2$ are such that $\Delta\Sigma$ is dissipative, then it follows from Lemma \ref{lem:diss_characterization}, system $\Sigma_2$ is $(T_e,\gamma,\delta)$-similar to $\Sigma_1$. Alternatively, if system $\Sigma_2$ is $(T_e,\gamma,\delta)$-similar to $\Sigma_1$, then from Lemma \ref{lem:diss_characterization}, it follows that $\Delta\Sigma$ is dissipative.

\begin{theorem}\label{thm:sim_implies_diff_sim}
    Suppose Assumptions \ref{assum:d_2_feedback} and \ref{assum:zero_measure} hold and system $\Sigma_2$ is $(T_e,\gamma,\delta)$-similar to system $\Sigma_1$. Then, system $\Delta \Sigma_2$ is differentially $(T_e,\gamma,\delta)$-similar to system $\Delta \Sigma_1$.
\end{theorem}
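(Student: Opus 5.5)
We will use Lemma \ref{lem:diss_characterization} to reduce the statement to a dissipation inequality for the differential composite system $\Delta\Sigma$, and then integrate that inequality along trajectories of $\Delta\Sigma$. Since $\Sigma_2$ is $(T_e,\gamma,\delta)$-similar to $\Sigma_1$ and Assumptions \ref{assum:d_2_feedback} and \ref{assum:zero_measure} hold, Lemma \ref{lem:diss_characterization} gives a metric $M(x)$ satisfying \eqref{eq:LMI_M}. It also shows that $\Delta\Sigma$ is dissipative with respect to $s_\Delta$ in \eqref{eq:supply_rate}, with storage function $V_\Delta(x,\Delta x)=\Delta x^\top M(x)\Delta x$.

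The candidate differential driving input is the feedback $\Delta u_d = K(x)\Delta x$, where $K=YW^{-1}$ is the gain from Theorem \ref{thm:LMI_characterization}. With this choice the differential dynamics are exactly the closed loop $\subscr{\Delta\Sigma}{cl}$ in \eqref{eq:diff_dyn_cl}.

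The key steps are as follows.

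\textbf{Step 1.} Fix a trajectory $x(t)$ of the composite system \eqref{eq:dynamics_composite} on $[0,T]$, with $T<T_e$. Fix arbitrary $\Delta w=\col{\Delta u_1,\Delta u_2,\Delta d_1,\Delta d_2}\in\mathcal{L}_{2,T}$ and initial variations $\Delta x_0=\col{\Delta x_{1,0},\Delta x_{2,0}}$. The closed-loop system $\subscr{\Delta\Sigma}{cl}$ is linear time-varying with coefficients continuous on $[0,T]$. Hence its solution $\Delta x(t)$ exists and is bounded on $[0,T]$.

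\textbf{Step 2.} Pre- and post-multiply \eqref{eq:LMI_M} by $\col{\Delta x,\Delta w}$, as in the proof of Lemma \ref{lem:diss_characterization}. This gives
\[
\frac{d}{dt}\left(\Delta x^\top M\Delta x\right)\le \Delta w^\top Q\Delta w-\Delta z^\top R\Delta z ,
\]
where $\Delta z=\mathcal{C}\Delta x=\col{\Delta y_1-\Delta y_2,\Delta u_d}$. Integrating from $0$ to $T$ and dropping the nonnegative term $V_\Delta(x(T),\Delta x(T))$ yields
\[
\norm{\Delta z}_{R,T}^2\le\norm{\Delta w}_{Q,T}^2+\Delta x_0^\top M(x_0)\Delta x_0 .
\]

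\textbf{Step 3.} Expand both sides using the structure of $Q$ and $R$ in \eqref{eq:Q_and_R}. The quadratic form $\Delta w^\top Q\Delta w$ equals $\gamma|\Delta u_1-\Delta u_2|^2+\delta|\col{\Delta u_1,\Delta u_2}|^2+\mu|\col{\Delta d_1,\Delta d_2}|^2$. The quadratic form $\Delta z^\top R\Delta z$ equals $|\Delta y_1-\Delta y_2|^2+\eta|\Delta u_d|^2$. Substituting these gives exactly \eqref{eq:similarity_def_diff} with
\[
\Delta b(x_0,\Delta x_{1,0},\Delta x_{2,0})=\Delta x_0^\top M(x_0)\Delta x_0 .
\]
This function is continuous, nonnegative, and vanishes when $\Delta x_0=\mathbf{0}$, as required by Definition \ref{def:diff_similarity}.

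\textbf{Step 4.} Show that $\Delta u_d\in\mathcal{L}_{2,T}$. The same inequality bounds $\eta\norm{\Delta u_d}^2_{[0,T]}$ by the finite quantity $\norm{\Delta w}^2_{Q,T}+\Delta b$. Alternatively, one can use the argument at the end of the proof of Theorem \ref{thm:LMI_characterization}: $K$ is bounded because $M$ is uniformly bounded and $Y$ is smooth, and $\Delta x$ is bounded by Step 1.

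The main obstacle is the consistency of the constants and the quantifiers. Definition \ref{def:diff_similarity} requires the same $\gamma,\delta,\mu,\eta$ to work for every base trajectory $x(\cdot)$, since the differential dynamics are evaluated along such trajectories. This uniformity comes from the fact that \eqref{eq:LMI_M} holds pointwise in $x$ with fixed $Q$ and $R$. We will therefore be careful to invoke Lemma \ref{lem:diss_characterization} in its converse direction, which gives a single metric $M$, rather than re-deriving constants trajectory by trajectory.
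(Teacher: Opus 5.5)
Your proposal is correct and follows the paper's own argument. You invoke Lemma~\ref{lem:diss_characterization} to obtain dissipativity of $\Delta\Sigma$ with storage $\Delta x^\top M(x)\Delta x$, integrate the dissipation inequality from $t_1=0$ to $t_2=T$, drop the nonnegative terminal storage, and read off \eqref{eq:similarity_def_diff}. Beyond that, you make explicit several points the paper leaves implicit: the choice $\Delta u_d=K\Delta x$ (so that the closed loop \eqref{eq:diff_dyn_cl} is the relevant system), the function $\Delta b=\Delta x_0^\top M(x_0)\Delta x_0$, the expansion of the $Q$- and $R$-weighted norms, and the fact that $\Delta u_d\in\mathcal{L}_{2,T}$.
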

\begin{proof}
    From Lemma \ref{lem:diss_characterization}, system $\Delta\Sigma$ is dissipative with respect to supply rate \eqref{eq:supply_rate}. For the choice of $t_1=0$ and $t_2=T$, it follows that $\norm{\Delta z}_{R,T}^2\leq \norm{\Delta w}_{Q,T}^2 + \Delta b$ which is equivalent to equation \eqref{eq:similarity_def_diff}.
\end{proof}

Theorem \ref{thm:sim_implies_diff_sim} establishes that if system $\Sigma_2$ is similar to $\Sigma_1$ in the sense of Definition \ref{def:similarity}, then $\Delta\Sigma_2$ is similar to $\Delta\Sigma_1$ in the sense of Definition \ref{def:diff_similarity}. 
This raises the question whether differential $(T_e,\gamma,\delta)$-similarity imply $(T_e,\gamma,\delta)$-similarity of the underlying nonlinear systems.
To address this question, the following result relates differential $(T_e,\gamma,\delta)$-similarity with dissipativity of system $\Delta\Sigma$ defined in \eqref{eq:diff_dyn_composite}.

\begin{lemma}\label{lem:diff_dyn_diss}
    Suppose Assumptions \ref{assum:d_2_feedback} and \ref{assum:zero_measure} hold. Further, suppose that system $\Delta \Sigma_2$ is differentially $(T_e,\gamma,\delta)$-similar to system $\Delta\Sigma_1$ and let $Q$ and $R$ be defined in \eqref{eq:Q_and_R}. Then, system $\Delta \Sigma$ defined in \eqref{eq:diff_dyn_cl} is dissipative with respect to the supply rate \eqref{eq:supply_rate}.
\end{lemma}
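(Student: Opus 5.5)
The plan is to extract the storage function directly from the inequality in Definition~\ref{def:diff_similarity}, in the spirit of the available storage of Willems, rather than going through the LMI. Stack the signals as in \eqref{eq:diff_dyn_composite}: $\Delta w = \col{\Delta u_1,\Delta u_2,\Delta d_1,\Delta d_2}$ and $\Delta z = \col{\Delta y_1-\Delta y_2,\Delta u_d}$. With $Q$ and $R$ from \eqref{eq:Q_and_R}, inequality \eqref{eq:similarity_def_diff} reads
\begin{equation*}
\norm{\Delta z}^2_{R,T} \leq \norm{\Delta w}^2_{Q,T} + \Delta b(x_0,\Delta x_0),
\end{equation*}
exactly as \eqref{eq:similarity_comp} was obtained from \eqref{eq:similarity_def}. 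Under Assumption~\ref{assum:d_2_feedback}, the driving input is $\Delta u_d = K\Delta x$, so the closed-loop differential system is \eqref{eq:diff_dyn_cl} with $\Delta z = \mathcal{C}\Delta x$. The inequality then holds for every $\Delta w\in\mathcal{L}_{2,T}$, every initial $(x_0,\Delta x_0)$ and every $0<T<T_e$.

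Next, define the candidate storage as the available storage along \eqref{eq:diff_dyn_cl}:
\begin{equation*}
V_\Delta(x,\Delta x) \coloneqq \sup_{T\in(0,T_e),\,\Delta w} \int_0^T \left(\Delta z^\top R\Delta z - \Delta w^\top Q\Delta w\right)dt ,
\end{equation*}
where the trajectory starts from $(x,\Delta x)$. Taking $\Delta w = 0$ and $T\to 0$ gives $V_\Delta\geq 0$. The similarity inequality gives $V_\Delta(x,\Delta x)\leq \Delta b(x,\Delta x)<\infty$, and $\Delta b(x,\mathbf{0})=0$ gives $V_\Delta(x,\mathbf{0})=0$.

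The dissipation inequality \eqref{eq:diff_diss} then follows from the standard concatenation argument. Fix $t_1\leq t_2$ and any $\Delta w$ on $[t_1,t_2]$. Concatenate it with an arbitrary continuation after $t_2$. The supremum defining $V_\Delta(x(t_1),\Delta x(t_1))$ is at least the integral over $[t_1,t_2]$ of $-s_\Delta$ plus the continuation's contribution. Taking the supremum over continuations yields
\begin{equation*}
V_\Delta(x(t_1),\Delta x(t_1)) \geq \int_{t_1}^{t_2}\!\left(-s_\Delta\right)dt + V_\Delta(x(t_2),\Delta x(t_2)),
\end{equation*}
which is \eqref{eq:diff_diss}. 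Time invariance of $f$, $g$ and $k$ is used so that the supremum from $(x(t_1),\Delta x(t_1))$ does not depend on $t_1$.

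To match the ``Moreover'' part of Lemma~\ref{lem:diss_characterization}, I would also argue that $V_\Delta$ is quadratic in $\Delta x$. For fixed base trajectory data $x$, \eqref{eq:diff_dyn_cl} is linear time-varying in $(\Delta x,\Delta w)$, and the cost is an indefinite quadratic form. Hence $V_\Delta(x,\cdot)$ is a finite, nonnegative quadratic form $\Delta x^\top M(x)\Delta x$, as in finite-horizon LQ/Riccati theory.

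The main obstacle is the coupling between $\Delta w$ and $x$. Changing $\Delta w$ in the supremum should not change the base trajectory, yet the base trajectory depends on $w$. I would handle this by treating $x(\cdot)$ as a fixed exogenous trajectory in \eqref{eq:diff_dyn_cl}, which is consistent with how the paper treats the differential dynamics as linear time-varying systems. A second subtlety is the horizon restriction $T<T_e$: the concatenation step needs $t_2 + T' < T_e$. I would restrict to such horizons, which is harmless because the suprema are bounded uniformly by $\Delta b$.
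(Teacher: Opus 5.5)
Your route is the paper's route. The paper's proof notes that the similarity inequality bounds $\int_0^T(-s_\Delta)\,dt$ by $\Delta b$, so in effect the available storage is finite, and then cites van der Schaft's available-storage theorem; you are reproving that theorem by concatenation. However, two steps of your reproof fail as written. The paper's one-line citation hides the same issues, but you raise them explicitly and resolve them incorrectly.

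First, the horizon restriction is not harmless. Write $\xi=(x,\Delta x)$ and let $V_\Delta^{\tau}$ denote your available storage with horizons restricted to $(0,\tau)$, so that $V_\Delta=V_\Delta^{T_e}$. A continuation of length $T'$ from $\xi(t_2)$ is admissible in the supremum defining $V_\Delta(\xi(t_1))$ only if $(t_2-t_1)+T'<T_e$. Concatenation therefore yields only
\begin{equation*}
V_\Delta(\xi(t_1)) \geq \int_{t_1}^{t_2}\left(-s_\Delta\right)dt + V_\Delta^{\,T_e-(t_2-t_1)}(\xi(t_2)).
\end{equation*}
Since $V_\Delta^{\tau}\leq V_\Delta$, this is weaker than \eqref{eq:diff_diss}. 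The uniform bound $V_\Delta\leq\Delta b$ says nothing about the gap between $V_\Delta^{\tau}$ and $V_\Delta$. What concatenation proves exactly is the dissipation inequality for the time-dependent storage $V(t,\xi)\coloneqq V_\Delta^{\,T_e-t}(\xi)$. It produces a storage depending on $(x,\Delta x)$ alone only when the bound holds for all horizons ($T_e=\infty$), which is the setting of the infinite-horizon theorem the paper invokes.

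Second, freezing the base trajectory is inconsistent with your appeal to time invariance. With $x(\cdot)$ exogenous, \eqref{eq:diff_dyn_cl} is linear time-varying. The supremum started from $\Delta x(t_1)$ then depends on $t_1$ through the future of $x(\cdot)$, so $V_\Delta$ is not a function of $(x,\Delta x)$.

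The consistent choice is to take the supremum jointly over $(w,\Delta w)$ for the prolonged system with state $(x,\Delta x)$. That system is time-invariant once $u_d=k(x)$ and $\Delta u_d=K(x)\Delta x$, and $V_\Delta\leq\Delta b$ survives because $\Delta b$ does not depend on $w$. Note that reading the witness $\Delta u_d$ of Definition~\ref{def:diff_similarity} as $K\Delta x$ is an interpretation you share with the paper, not a consequence of Assumption~\ref{assum:d_2_feedback}, which concerns the nonlinear $u_d$. Concatenation genuinely needs it, since a $\Delta u_d$ chosen with knowledge of the whole future of $\Delta w$ does not define a state-space system.

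With the joint supremum, your quadratic-form paragraph breaks. $V_\Delta(x,\cdot)$ becomes a supremum over $w$ of quadratic forms, which is convex and homogeneous of degree two but not in general quadratic. With frozen $x(\cdot)$ instead, the Riccati value is $\Delta x^\top P(t)\Delta x$ with $P$ depending on future base data, not a matrix $M(x)$. That paragraph is not needed for this lemma and should be dropped.
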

\begin{proof} 
    Since $\Delta \Sigma_2$ is differentially $(T_e,\gamma,\delta)$-similar to system $\Delta \Sigma_1$, it follows that $-\norm{\Delta w}_{Q,T}^2 + \norm{\Delta z}_{R,T}^2 \leq \Delta b$ for all $\Delta w$ and $\Delta x_0$. This means that for all $\Delta w \in \mathcal{L}_{2,T}$ and $\Delta x_0$, $-s_{\Delta}\left(\Delta w(t),\Delta z(t)\right)<\infty$, i.e., is finite.
    Thus, from \cite[Theorem 3.1.11]{van2000l2}, the claim follows.
\end{proof}

\begin{remark}
    It is well known that differential dissipativity implies stability of system $\Delta \subscr{\Sigma}{cl}$ (cf. \eqref{eq:diff_dyn_cl}) when $\forall \Delta z$, $s_\Delta(0,\Delta z)\leq 0$ \cite{verhoek2023convex}. In this work, this holds naturally since $-R\prec 0$ (as $\eta>0$). 
\end{remark}

\begin{remark}
    Following analogous steps as in Lemma \ref{lem:diff_dyn_diss}, it can be established that if system $\Sigma_2$ is $(T_e,\gamma,\delta)$-similar to system $\Sigma_1$, then system $\Sigma$ is dissipative with analogously defined supply rate.
\end{remark}



\begin{theorem}\label{thm:iff}
    Suppose Assumptions \ref{assum:d_2_feedback} and \ref{assum:zero_measure} hold hold. Further, suppose that for an arbitrary choice of $u_d(t)$, there exists a $\Delta u_d(t)$ such that system $\Delta\Sigma_2$ is $(T_e,\gamma,\delta)$-similar to system $\Delta\Sigma_1$. Then, system $\Sigma_2$ is $(T_e,\gamma,\delta)$-similar to system $\Sigma_1$ for the choice of $u_d(t)$. 
\end{theorem}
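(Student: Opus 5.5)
The plan is to go from differential similarity to differential dissipativity with a quadratic storage function, and then reuse the integration-along-geodesics argument of Theorem \ref{thm:LMI_characterization}.

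\textbf{Step 1 (differential dissipativity).} Fix the given $u_d(t)$, with its feedback form from Assumption \ref{assum:d_2_feedback}. For this choice, $\Delta u_d$ enters through $K=\partial k/\partial x$. Differential $(T_e,\gamma,\delta)$-similarity of $\Delta\Sigma_2$ to $\Delta\Sigma_1$ then gives
\begin{equation*}
\norm{\Delta z}_{R,T}^2 \le \norm{\Delta w}_{Q,T}^2 + \Delta b(x_0,\Delta x_0)
\end{equation*}
for all admissible $\Delta w$ and $\Delta x_0$, with $Q,R$ as in \eqref{eq:Q_and_R}. By Lemma \ref{lem:diff_dyn_diss}, $\subscr{\Delta\Sigma}{cl}$ is dissipative with respect to the supply rate $s_\Delta$ in \eqref{eq:supply_rate}.

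\textbf{Step 2 (quadratic storage and LMI).} Since $\subscr{\Delta\Sigma}{cl}$ is linear time-varying along each trajectory $(x(t),w(t),u_d(t))$, I will take the available storage $V_\Delta(x,\Delta x)=\sup\{-\int s_\Delta\}$. Linearity makes it a quadratic form in $\Delta x$, so $V_\Delta = \Delta x^\top M(x)\Delta x$ for some $M(x)\ge 0$. Differentiating \eqref{eq:diff_diss} along trajectories, as in the converse direction of Lemma \ref{lem:diss_characterization}, yields \eqref{eq:LMI_M}. Two Schur-complement steps with $W=M^{-1}$ and $Y=KW$ then give \eqref{eq:LMI_W}.

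\textbf{Step 3 (lifting to the nonlinear systems).} I will repeat the argument of Theorem \ref{thm:LMI_characterization}. Along minimizing paths $\nu(t,s)$ joining $x(t)$ to $x^*(t)$, set $\Delta x=\partial_s\nu$ and $\Delta w = w$. Integrating the differential dissipation inequality over $s$ and applying Cauchy--Schwarz to the output gives
\begin{equation*}
\tfrac{d}{dt}E(x(t),x^*(t)) \le -z^\top R z + w^\top Q w
\end{equation*}
for almost every $t$, using Assumption \ref{assum:zero_measure}. Integrating over $[0,T]$ gives \eqref{eq:similarity_comp} with $b(x_0)=E(x_{1,0},x_{2,0})$. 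This $b$ is continuous, nonnegative, and satisfies $b(x,x)=0$. Finally, $u_d\in\mathcal{L}_{2,T}$ follows exactly as in the last part of that proof. This gives \eqref{eq:similarity_def} for the chosen $u_d$.

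\textbf{Main obstacle.} The hard part is Step 2, showing that $M(x)$ is a \emph{uniformly bounded} metric. Positive definiteness should follow from $-R\prec 0$ and the remark on stability of $\subscr{\Delta\Sigma}{cl}$. A uniform upper bound should follow from the finite-valued, continuous $\Delta b$, restricted to a compact set of states reachable before $T_e$. I will try to obtain both bounds from $V_\Delta(x,\Delta x)\le \Delta b(x,\Delta x)$ together with quadratic homogeneity in $\Delta x$. If uniform bounds fail globally, I will state the result on compact subsets of $(x,w,u_d)$, as in Theorem \ref{thm:sufficient}.
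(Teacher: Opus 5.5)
Your route is the paper's. The paper proves Theorem \ref{thm:iff} in one line by chaining two results. The first is Lemma \ref{lem:diff_dyn_diss}, which is your Step 1. The second is the converse half of Lemma \ref{lem:diss_characterization}, which differentiates a storage $\Delta x^\top M(x)\Delta x$ to obtain \eqref{eq:LMI_M} and then invokes Theorem \ref{thm:LMI_characterization}; this is your Steps 2 and 3. The paper never derives that quadratic form, though; it simply asserts it in the ``Moreover'' clause of Lemma \ref{lem:diss_characterization}. You try to derive it, and your derivation in Step 2 has a genuine gap.

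\textbf{The storage need not be a metric.} The storage in \eqref{eq:diff_diss} is a function of the prolonged state $(x,\Delta x)$. Its available storage therefore takes the supremum over the base input $w$ as well as over $\Delta w$ and $T$. For fixed $w$, the inner problem is an indefinite LQ problem for an LTV system and does give $\Delta x^\top P_w\Delta x$. But a supremum of quadratic forms over $w$ is only convex and $2$-homogeneous in $\Delta x$, not quadratic (e.g.\ $\max\{\Delta x_1^2,\Delta x_2^2\}$). If you fix the base trajectory instead, you get $\Delta x^\top P(t)\Delta x$, where $P(t)$ depends on the whole future of $(x,w,u_d)$ rather than on $x(t)$ alone. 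That is not a Riemannian metric, and Step 3 needs one. The energy $E(\nu(t))$ evaluates $M$ at every point $\nu(t,s)$ of the geodesic, and \eqref{eq:LMI_M} must hold there too.

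\textbf{Definiteness fails independently.} Neither $-R\prec 0$ nor stability gives $M\succ 0$. By linearity, the available storage vanishes on directions of $\Delta x$ that are unobservable from $\Delta z$ along every base trajectory. Because $u_d$ is arbitrary, such directions can exist. For instance, with $K=0$, $\Delta z$ sees only $C_1\Delta x_1-C_2\Delta x_2$. Then $W=M^{-1}$ is undefined. The bound $V_\Delta\le\Delta b$ can only ever supply the upper estimate (your homogeneity argument does give that on compact sets).

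So uniform boundedness is not the main obstacle. To close Step 2 you must do one of two things. Either take the form $\Delta x^\top M(x)\Delta x$, with $M$ uniformly positive definite, as a hypothesis, which is in effect what the paper does. Or add assumptions, e.g.\ uniform observability of $(\mathcal{A},\mathcal{C})$, together with a converse result that actually produces a state-dependent quadratic storage.
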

\begin{proof}
    The proof directly follows from Lemma \ref{lem:diff_dyn_diss} and Lemma \ref{lem:diss_characterization}.
\end{proof}

Theorem \ref{thm:iff} implies that characterizing similarity of differential dynamics of \eqref{eq:diff_dyn_cl}, for an arbitrary $d_2$, yields similarity of the nonlinear system with the choice of the specified $u_d$. Since the differential dynamics $\Delta \Sigma_1$ and $\Delta \Sigma_2$ of $\Sigma_1$ and $\Sigma_2$, respectively, represent a Linear Time-Varying system (LTV) (cf. \eqref{eq:diff_dyn_1} and \eqref{eq:diff_dyn_2}, respectively), the problem of determining whether the given nonlinear systems are $(T_e,\gamma,\delta)$-similar might be easier to solve by determining whether the associated differential dynamics are similar. 

Next, we establish that $(T_e,\gamma,\delta)$-similarity satisfies reflexivity and transitivity properties, further motivating $(T_e,\gamma,\delta)$-similarity as a notion of system comparison.
We begin with recalling the definition of minimal state-space realization of $\Delta\Sigma$.

\begin{definition}
    System $\Delta\Sigma$ is said to have a minimal realization if the pair $[A(t), B(t)]$ is uniformly controllable and the pair $[A(t), C(t)]$ is uniformly observable.
\end{definition}

\begin{proposition}\label{prop:reflexivity}
    Consider system $\Sigma$ of the following form:
    \begin{equation}\label{eq:similarity_itself_1}
       \Sigma: \begin{cases}
        \dot{x}_1 = f(x_1,u_1,d_1), \quad x_1(0) = x_{1,0};\\
        y_1 = g(x_1).
    \end{cases} 
    \end{equation}
    Now consider the same system, denoted as $\Sigma_d$, but with input $u_2$, disturbance $d_2$, and a perturbation input $u_d$:
    \begin{equation}\label{eq:similarity_itself_2}
       \Sigma_d: \begin{cases}
        \dot{x}_2 = f(x_2,u_2,d_2+u_d), \quad x_2(0) = x_{1,0};\\
        y_2 = g(x_2).
    \end{cases} 
    \end{equation}
    Suppose, under zero disturbance, the state-space realization of the differential dynamics $\Delta\Sigma$ of system $\Sigma$ is minimal and bounded. Further suppose that, under zero input and zero disturbance, $\Delta \Sigma$ is exponentially stable.
    Then, system $\Sigma_d$ is $(T_e,\gamma,\delta)$-similar to system $\Sigma$, or equivalently $\Sigma$ is $(T_e,\gamma,\delta)$-similar to itself. 
\end{proposition}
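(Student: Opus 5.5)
The plan is to verify the hypotheses of Lemma \ref{lem:diss_characterization}. In the self-comparison setting, this means showing that the closed-loop differential composite system $\subscr{\Delta\Sigma}{cl}$ in \eqref{eq:diff_dyn_cl} admits a metric $M(x)$ satisfying \eqref{eq:LMI_M} for some positive $\gamma,\delta,\mu,\eta$. Once that is in hand, Theorem \ref{thm:LMI_characterization} yields $(T_e,\gamma,\delta)$-similarity. Here the composite state is $x=\col{x_1,x_2}$ with $A=\mathrm{blkdiag}(A_1(x_1,u_1,d_1),A_1(x_2,u_2,d_2+u_d))$. The term $E=\col{\mathbf{0},E_1(x_2,\cdot)}$ arises because $u_d$ enters exactly like the disturbance, and $C=[C_1(x_1),\,-C_1(x_2)]$.

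The first step is to build a contraction metric for a single copy. Since $\Delta\Sigma$ is exponentially stable under zero input and disturbance and its matrices are bounded, a standard converse Lyapunov result for uniformly exponentially stable LTV systems applies. It gives a uniformly bounded $M_1(x_1)$ (equivalently $P_1(t)=\int_t^\infty\Phi^\top\Phi$) with
\begin{equation*}
\dot M_1+M_1A_1+A_1^\top M_1\le -2\lambda M_1,\qquad \alpha_1\mathbf{I}\le M_1\le\alpha_2\mathbf{I}.
\end{equation*}
Minimality (uniform controllability and observability) is used to guarantee that this Gramian-type metric is uniformly bounded above and below. The same inequality holds along trajectories of the second copy.

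The second step is to take $W=\mathrm{blkdiag}(W_1,W_1)$ with $W_1=M_1^{-1}$ and $Y=\mathbf{0}$, i.e.\ $u_d$ is chosen trivially. Then \eqref{eq:CCM} holds with rate $\lambda$, since the cross terms vanish and each diagonal block is contracting. Theorem \ref{thm:sufficient} then applies: on compact sets of $(x,w,u_d)$, boundedness of $B$, $C$ and $W$ lets us choose $\eta$ small and $\gamma,\delta,\mu$ large so that $2\lambda W>\mathcal{C}^\top R\mathcal{C}+BQ^{-1}B^\top$. By two Schur complements and Theorem \ref{thm:LMI_characterization}, this gives the similarity inequality. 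The initial-condition term is $b=E(x_0,x_0^*)$, and since both copies start at $x_{1,0}$ we have $b=0$, consistent with $b(x,x)=0$. We will also remark that one could instead take $u_d$ nonzero, say $Y=-\kappa E^\top$-type damping, but this is not needed.

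The main obstacle is the first step: producing a \emph{state-dependent}, uniformly bounded metric $M_1(x)$ rather than merely a time-varying one. Exponential stability of $\Delta\Sigma$ along each trajectory gives only $P(t)$. I would handle this by defining $M_1$ along the flow via the converse construction and invoking minimality and boundedness for uniform bounds. Failing that, I would work directly with the time-varying $P(t)$ in \eqref{eq:LMI_M}, since $\dot M$ there is already a time derivative along trajectories. A secondary subtlety is that the sufficiency in Theorem \ref{thm:sufficient} holds only on compact sets. I would justify this by noting that for $T<T_e$ the trajectories with $\mathcal{L}_{2,T}$ inputs stay in a bounded region.
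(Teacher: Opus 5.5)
There is a genuine gap, and it is the one you flagged yourself: Step 1 does not go through. Exponential stability of the homogeneous variational system $\Delta\dot x_1=A_1(x_1(t),u_1(t),d_1(t))\Delta x_1$, together with boundedness, gives via the LTV converse Lyapunov theorem only a trajectory-dependent $P(t)=\int_t^\infty\Phi(\tau,t)^\top\Phi(\tau,t)\,d\tau$. Because $\Phi$ depends on the future inputs $(u_1,d_1)$, the same state is assigned different values by different input signals and times. So ``defining $M_1$ along the flow'' does not yield a well-defined function $M_1(x_1)$, let alone one that contracts for every admissible input, which is what \eqref{eq:CCM} demands. Your fallback of using $P(t)$ directly is incompatible with Theorem \ref{thm:LMI_characterization}. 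Its proof evaluates the metric at every point of the minimizing path $\nu(t,\cdot)$ joining $x(t)$ to $x^*(t)$, i.e., off the trajectory, where a $P(t)$ attached to a single trajectory has no meaning. (Minimality is also not what bounds $P$; uniform exponential stability and boundedness of $A_1$ do.) So the chain Theorem \ref{thm:sufficient} $\Rightarrow$ Theorem \ref{thm:LMI_characterization} never starts.

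The paper needs no metric at all. It takes $u_d=d_1-d_2$ so that both copies see the same disturbance. It then uses exponential stability plus minimality and boundedness of the LTV differential error dynamics to get a finite $\mathcal{L}_2$ gain \cite{silverman1968controllability}, and lifts this to an incremental gain of the nonlinear error system via \cite[Theorem 3.4]{fromion2003theoretical}. Finally it compares with the reference $u_2'=u_1$, which produces zero output error from the common initial state.

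You can repair your argument the same way and even keep $u_d=0$. Interpolate $u^s=u_1+s(u_2-u_1)$, $d^s=d_1+s(d_2-d_1)$ from $x_{1,0}$, and observe that $\partial_s x^s$ obeys the variational dynamics driven by $\col{u_2-u_1,d_2-d_1}$ with zero initial condition. Bound its output uniformly in $s$, then integrate over $s\in[0,1]$ to get $\norm{y_1-y_2}^2_{[0,T]}\le\bar\gamma\left(\norm{u_1-u_2}^2_{[0,T]}+\norm{d_1-d_2}^2_{[0,T]}\right)$. This is absorbed by $\gamma=\bar\gamma$ and $\mu=2\bar\gamma$, with any $\delta,\eta>0$.

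If you nevertheless pursue the metric route, two further issues need fixing:
\begin{itemize}
\item With $Y=\mathbf{0}$ the output block of $\mathcal{C}^\top R\mathcal{C}$ carries the fixed weight $\mathbf{I}$. Tuning $\gamma,\delta,\mu,\eta$ therefore cannot shrink that term, and you must instead rescale $W\to cW$ with $c$ small; this is legitimate because the contraction inequality is homogeneous in $W$.
\item $\mathcal{L}_{2,T}$ signals are not pointwise bounded, so bounded states do not place $w$ in a compact set. You would have to rely on the uniform boundedness hypothesis on $\Delta\Sigma$ rather than on the compact-set clause of Theorem \ref{thm:sufficient}.
\end{itemize}
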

\begin{proof}
    Select $u_d =d_1-d_2$ and define $e_x = x_1-x_2$ and $e_y = y_1-y_2$. The system corresponding to the state $e_x$ and output $e_y$ is:
    \begin{equation}
       \Sigma_e: \begin{cases}
        \dot{e}_x = f(x_1,u_1,d_1) - f(x_2,u_2,d_1); e_x(0) = 0\\
        e_y = g(x_1) - g(x_2).
    \end{cases} 
    \end{equation}
    Note that if $u_1=u_2$, then $e_x=e_y=0$ as $x_1(0) = x_2(0)$.
    The differential dynamics of system $\Sigma_e$, denoted as $\Delta\Sigma_e$, with state $\Delta e_x$ and output $\Delta e_y$ is: 
    \begin{equation}
    \Delta \Sigma_e: 
        \begin{cases}
            \Delta \dot{e}_x = A e_x + B\left(\Delta u_1 - \Delta u_2\right); \Delta e_x(0) = 0\\
            \Delta e_y = Ce_x.
        \end{cases}
    \end{equation}
    Since the differential dynamics of $\Sigma$ are assumed to be exponentially stable, system $\Delta \Sigma_e$, under zero input, is exponentially stable. It follows from \cite[Theorem 3]{silverman1968controllability} (see also \cite[Lemma 3.6]{fromion2003theoretical}) that there exist a finite constant $\gamma>0$ such that
    \begin{align*}
        \norm{\Delta e_y}_{[0,T]}^2\leq \gamma \norm{\Delta u_1-\Delta u_2}_{[0,T]}^2.
    \end{align*}
    Denote $\tilde{f}(x_1,x_2,u) = f(x_1,u_1,d_1) - f(x_2,u_2,d_1)$ and $\tilde{g}(x_1,x_2) = g(x_1)-g(x_2)$, where $u\coloneqq \col{u_1,u_2,d_1}$.
    From \cite[Theorem 3.4]{fromion2003theoretical}, it follows that for any inputs $u\in \mathcal{L}_{2,T}$ and $u'=\col{u_1',u_2',d_1'}\in \mathcal{L}_{2,T}$, system $\Sigma_e$ is incrementally stable, i.e.,
    \begin{multline*}
        \norm{y_1-y_2-y_1'+y_2'}_{[0,T]}^2 \leq \\
        \gamma\norm{\col{u_1-u_1',u_2-u_2',d_1-d_1'}}_{[0,T]}^2,
    \end{multline*}
    where $y_1'-y_2'$ is the output of system $\Sigma_e$ under input $u'$.
    Selecting $u_2' = u_1'=u_1$ and since $x_1(0) = x_2(0)$ yields $y_1'-y_2' = 0$. Further, select $d_1' = 2d_1$. Then, it follows that for any $u_1,u_2,d_1$,
    \begin{align*}
        \norm{y_1-y_2}_{[0,T]}^2 &\leq \gamma\norm{\col{u_2-u_1,-d_1}}_{[0,T]}^2\\
        &= \gamma\norm{u_1-u_2}_{[0,T]}^2 + \gamma\norm{d_1}_{[0,T]}^2.
    \end{align*}
    For any $\delta>0$, $\eta<2\gamma$, and $\mu=\gamma$,
    \begin{multline*}
        \norm{y_1-y_2}_{[0,T]}^2 \leq \gamma\norm{u_1-u_2}_{[0,T]}^2 + \delta\norm{\col{u_1,u_2}}_{[0,T]}^2\\
        +\mu\norm{\col{d_1,d_2}}_{[0,T]}^2-\eta\norm{d_1-d_2}_{[0,T]}^2.
    \end{multline*}
    The claim follows since $b(x_1(0),x_1(0))=0$ for any $x_1(0)$.
\end{proof}

The next result establishes that the $(T_e,\gamma,\delta)$-similarity also exhibits the transitivity property.

\begin{proposition}\label{prop:transivity}
    Given systems $\Sigma_1$, $\Sigma_2$, and $\Sigma_3$, let $T_e = \min\{T_e^1,T_e^2,T_e^3\}$.
    Suppose system $\Sigma_2$ is $\left(T_e,\gamma_{12},\delta_{12}\right)$-similar to system $\Sigma_1$ and system $\Sigma_3$ is $(T_e,\gamma_{23},\delta_{23})$-similar to system $\Sigma_2$. Then, system $\Sigma_3$ is $(T_e,\gamma_{13},\delta_{13})$-similar to system $\Sigma_1$ if $b_{13}\left(x_{1,0},x_{3,0}\right)\geq b_{12}\left(x_{1,0},x_{2,0}\right) + b_{23}\left(x_{2,0},x_{3,0}\right)$.
\end{proposition}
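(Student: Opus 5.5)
The plan is to chain the two similarity inequalities through the intermediate system $\Sigma_2$ and then absorb the cross terms using $\norm{a+b}^2\leq 2\norm{a}^2+2\norm{b}^2$. Fix $0<T<T_e$, inputs $u_1,u_3$, disturbances $d_1,d_3$ and initial conditions $x_{1,0},x_{3,0}$. To apply the two hypotheses I need an auxiliary input $u_2$, disturbance $d_2$ and initial condition $x_{2,0}$ for $\Sigma_2$. I would choose $u_2=u_1$, $d_2=0$, and $x_{2,0}$ arbitrary, for instance $x_{2,0}=x_{1,0}$; the hypothesis on $b_{13}$ makes the final constant work for any such choice.

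First I apply the $(T_e,\gamma_{12},\delta_{12})$-similarity of $\Sigma_2$ to $\Sigma_1$. This gives a driving input $u_{d}^{12}$ such that $\norm{y_1-y_2}^2$ is bounded by $\gamma_{12}\norm{u_1-u_2}^2+\delta_{12}\norm{\col{u_1,u_2}}^2+\mu_{12}\norm{\col{d_1,d_2}}^2-\eta_{12}\norm{u_d^{12}}^2+b_{12}$. Here $y_2$ is the output of $\Sigma_2$ driven by $u_d^{12}$, which has $\Sigma_2$ with $u_d^{12}$ inserted as the intermediate system. Next I apply the $(T_e,\gamma_{23},\delta_{23})$-similarity of $\Sigma_3$ to this driven $\Sigma_2$, treating $u_d^{12}$ as part of $\Sigma_2$'s disturbance or input. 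This yields $u_d^{23}=:u_d^{13}$ with the analogous bound on $\norm{y_2-y_3}^2$. This step is the main obstacle. The second hypothesis is stated for $\Sigma_2$ with no driving input, so I must either interpret $u_d^{12}$ as an admissible $\mathcal{L}_{2,T}$ disturbance, which is justified since the proof of Theorem~\ref{thm:LMI_characterization} shows $u_d\in\mathcal{L}_{2,T}$, or assume that $\Sigma_2$'s disturbance channel contains the driving channel. I would adopt the first interpretation and fold $\norm{u_d^{12}}^2$ into the $\mu_{23}$ term. That term is then cancelled by choosing the weight so that $2\mu_{23}\leq\eta_{12}$, after rescaling the first inequality by a positive factor if necessary. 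This rescaling is legitimate since only the existence of the constants matters.

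Then I use $\norm{y_1-y_3}^2\leq 2\norm{y_1-y_2}^2+2\norm{y_2-y_3}^2$ and substitute $u_2=u_1$. The $\gamma_{12}$ term vanishes. The $\gamma_{23}$ term becomes $\norm{u_1-u_3}^2$. The $\delta$ terms are bounded by a multiple of $\norm{\col{u_1,u_3}}^2$, since $\norm{\col{u_1,u_1}}^2\leq 2\norm{\col{u_1,u_3}}^2$. The $\mu$ terms are bounded by a multiple of $\norm{\col{d_1,d_3}}^2$. This gives the desired inequality with $\gamma_{13}=2\gamma_{23}$, $\delta_{13}=4\delta_{12}+4\delta_{23}$ (up to constants), $\mu_{13}=2(\mu_{12}+\mu_{23})$, and $\eta_{13}=2\eta_{23}$. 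The initial-condition term is $2(b_{12}+b_{23})$. After the same factor-$2$ rescaling of the $b$'s, which are only required to exist, it is dominated by $b_{13}$ by hypothesis. Continuity and $b_{13}(x,x)\geq 0$ are inherited.

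Finally, all constructed signals lie in $\mathcal{L}_{2,T}$ for every $T<T_e=\min\{T_e^1,T_e^2,T_e^3\}$, because each system's solution exists on that interval. This gives the claim for all such $T$.
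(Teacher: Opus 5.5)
Your argument is sound, and it differs from the paper's in a substantive way. The paper takes the midpoint $u_2=\tfrac12(u_1+u_3)$ with $d_2=0$ and adds the two inequalities via $\tfrac12\norm{y_1-y_3}^2\le\norm{y_1-y_2}^2+\norm{y_2-y_3}^2$. It then simply drops $-\eta_{12}\norm{u_d^2}^2$. This gives the symmetric constant $\gamma_{13}=(\gamma_{12}+\gamma_{23})/2$. However, it tacitly identifies the output of $\Sigma_2$ \emph{driven} by $u_d^2$ in the first inequality with the output of the undriven reference $\Sigma_2$ in the second, and the definitions do not provide that identification.

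You take $u_2=u_1$ instead, so the $\gamma_{12}$ term vanishes and $\gamma_{13}$ depends only on $\gamma_{23}$. More importantly, you confront the driven/undriven mismatch. You feed $u_d^{12}$ into $\Sigma_2$'s disturbance for the second hypothesis, and you pay for the resulting $\mu_{23}\norm{u_d^{12}}^2$ with the $-\eta_{12}\norm{u_d^{12}}^2$ term, which the paper never uses. This is more rigorous than the paper's chaining, at the price of a structural assumption. What actually licenses the step is your option (b): $\Sigma_2$'s driving channel is part of its disturbance channel when $\Sigma_2$ acts as reference. The fact that $u_d^{12}\in\mathcal{L}_{2,T}$ is already part of Definition~\ref{def:similarity}, so Theorem~\ref{thm:LMI_characterization} is not needed for it. But that fact alone does not make the driven trajectory a trajectory of the reference $\Sigma_2$.

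Some bookkeeping needs tightening.

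\textbf{Rescaling.} Rescaling the first inequality is compatible with your factor-$2$ triangle inequality, because $2\norm{y_1-y_2}^2\le 2c\norm{y_1-y_2}^2$ for $c\ge1$. Take $c=\max\{1,\mu_{23}/\eta_{12}\}$. Then the coefficient $2\mu_{23}-2c\,\eta_{12}$ of $\norm{u_d^{12}}^2$ is nonpositive. The correct condition is $\mu_{23}\le c\,\eta_{12}$, not $2\mu_{23}\le\eta_{12}$. After this, $\delta_{13}$, $\mu_{13}$ and the $b_{12}$ coefficient all pick up the factor $c$.

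\textbf{Initial-condition term.} The resulting term $2c\,b_{12}+2b_{23}$ is dominated only by $2c\,b_{13}$, so the conclusion holds with a rescaled $b_{13}$. This is acceptable because the definition only asks for some continuous $b\ge0$ with $b(x,x)=0$. Note that the property you need is $b_{13}(x,x)=0$, not $b_{13}(x,x)\ge0$. The paper's proof has the same slack: its last step actually requires $b_{13}\ge 2(b_{12}+b_{23})$.

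\textbf{Hypothesis on $b_{13}$.} It must be read for your chosen $x_{2,0}$, not for every $x_{2,0}$. Read for all $x_{2,0}$, it would force $b_{12}(x,\cdot)$ and $b_{23}(\cdot,x)$ to vanish identically, because $b_{13}(x,x)=0$.
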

\begin{proof}
    Since $\Sigma_2$ is $\left(T_e,\gamma_{12},\delta_{12}\right)$-similar to $\Sigma_1$
    \begin{multline}\label{eq:trans_12}
        \norm{y_1-y_2}^2_{[0,T]} \leq \gamma_{12}\norm{u_1-u_2}^2_{[0,T]} + \delta_{12} \norm{\col{u_1,u_2}}^2_{[0,T]} \\
        + \mu_{12}\norm{\col{d_1,d_2}}^2_{[0,T]} -\eta_{12}\norm{u_d^2}^2_{[0,T]} \\
        + b_{12}\left(x_{1,0},x_{2,0}\right),
    \end{multline}
    where $u_d^2$ denotes the perturbation input applied to $\Sigma_2$.
    Similarly, since $\Sigma_3$ is $\left(T_e,\gamma_{23},\delta_{23}\right)$-similar to $\Sigma_2$
    \begin{multline}\label{eq:trans_23}
        \norm{y_2-y_3}^2_{[0,T]} \leq \gamma_{23}\norm{u_1-u_2}^2_{[0,T]} + \delta_{23} \norm{\col{u_1,u_2}}^2_{[0,T]} \\
        + \mu_{23}\norm{\col{d_1,d_2}}^2_{[0,T]} -\eta_{23}\norm{u_d^3}^2_{[0,T]} \\
        + b_{23}\left(x_{2,0},x_{3,0}\right),
    \end{multline}
    where $u_d^3$ denotes the perturbation input applied to $\Sigma_3$.
    Selecting $u_2 = \frac{1}{2}\left(u_1+u_3\right)$ and $d_2 = 0$ followed by adding \eqref{eq:trans_12} and \eqref{eq:trans_23}
    \begin{multline*}
        \frac{1}{2}\norm{y_1-y_3}^2_{[0,T]} \leq \frac{\left(\gamma_{12}+\gamma_{23}\right)}{4}\norm{u_1-u_3}^2_{[0,T]} +\\
        \left(\delta_{12}+\delta_{23}\right)\left(\frac{3}{2}\norm{u_1}^2_{[0,T]} + \frac{1}{2}\norm{u_3}^2_{[0,T]} \right)+ 
        \mu_{12}\norm{d_1}^2_{[0,T]} \\
        + \mu_{23}\norm{d_3}^2_{[0,T]} 
        -\eta_{12}\norm{u_d^2}^2_{[0,T]} -\eta_{23}\norm{u_d^3}^2_{[0,T]}\\
        + b\left(x_{1,0},x_{2,0}\right) + b\left(x_{2,0},x_{3,0}\right).
    \end{multline*}

    Selecting $\gamma_{13}=\frac{\gamma_{12}+\gamma_23}{2}$, $\delta_{13}=\left(\delta_{12}+\delta_{23}\right)$, $\mu_{13} = 2\max\{\mu_{12},\mu_{23}\}$, and $\eta_{13} = 2\eta_{23}$ followed by using that $b_{13}\left(x_{1,0},x_{3,0}\right)\geq b_{12}\left(x_{1,0},x_{2,0}\right) + b_{23}\left(x_{2,0},x_{3,0}\right)$  yields
    \begin{multline*}
        \norm{y_1-y_3}^2_{[0,T]} \leq \gamma_{13}\norm{u_1-u_3}^2_{[0,T]} +\delta_{13}\norm{\col{u_1,u_3}}_{[0,T]}^2 \\
        + \mu_{13}\norm{\col{d_1,d_3}}_{[0,T]}^2-\eta_{13}\norm{u_d^3}^2_{[0,T]}+b_{13}\left(x_{1,0},x_{3,0}\right)
    \end{multline*}
    and the claim is established.
\end{proof}

Proposition \ref{prop:transivity} imposes some mild restrictions on the initial conditions of the systems.
In other words, Proposition \ref{prop:transivity} requires $b_{13}\left(x_{1,0},x_{3,0}\right)\geq b_{12}\left(x_{1,0},x_{2,0}\right) + b_{23}\left(x_{2,0},x_{3,0}\right)$.
There are cases for which this condition trivially holds such as when the three systems have identical initial conditions.
Additionally, observe that the proof of Proposition \ref{prop:transivity} does not use the fact that the systems have nonlinear dynamics. Thus, we conjecture that the transitivity property of $(T_e,\gamma,\delta)$-similarity may hold for a much broader class of systems including hybrid systems.

\section{Applications}\label{sec:numerics}

We now illustrate two applications of $(T_e,\gamma,\delta)$-similarity.
In Section \ref{subsec:App_1}, we will illustrate the application of $(T_e,\gamma,\delta)$-similarity for robust hierarchical control via abstract models.
In Section \ref{subsec:App_2}, we will illustrate the application of $(T_e,\gamma,\delta)$-similarity for \emph{enhancing} available abstract model (such as linearized model or a reduced model). 
While not illustrated in this work, one may also find application of $(T_e,\gamma,\delta)$-similarity in specification verification or component replacement \cite{pirastehzad2024comparison}.
Unless otherwise stated, all of our numerical results have been obtained by using MATLAB 2023b via YALMIP \cite{lofberg2004yalmip} and Mosek.

\subsection{Robust Hierarchical Control of Planar Vertical and Take Off Landing (PVTOL) Aircraft}\label{subsec:App_1}

The idea is to use an abstract model, denoted as $\Sigma_1$ from now on in this subsection, for control design. Once the control input via $\Sigma_1$ is determined, a suitable control input determined via the abstraction can then be applied to  the PVTOL $\Sigma_2$.
We will see shortly that if $\Sigma_2$ is $(T_e,\gamma,\delta)$-similar to its abstraction $\Sigma_1$ then, by using this approach, the trajectory of $\Sigma_2$ approximately follows that of $\Sigma_1$, even under the presence of $d_2$, i.e., rejecting the disturbance $d_2$ that acts on $\Sigma_2$.

Consider a PVTOL aircraft \cite{zhao2022tube,cao2021estimating}  with mass $m$, gravitational acceleration $g$, and rotational inertia $J$ with the state vector defined as $x_2 = \begin{bmatrix}
    p_x^Q & p_z^Q & v_x^Q & v_z^Q & \theta & \dot{\theta}
\end{bmatrix}^\top$, where $\begin{bmatrix}
    p_x^Q & p_z^Q
\end{bmatrix}^\top$ denote the position in $x$ and $z$ directions, respectively, $v_x^Q$ and $v_z^Q$ denote the lateral velocity and the velocity along the thrust axis in the body frame, respectively, and $\theta$ denotes the roll angle. The input vector $u_2 = \begin{bmatrix}
    u_{2,1} & u_{2,2}
\end{bmatrix}$ contains the thrust (offset by the weight $mg$ of the vehicle) and the net torque. The dynamics of the vehicle, denoted as $\Sigma_2$ in this section, in the world frame are given by:

\begin{align*}
    \dot{x}_2 &= \begin{bmatrix}
        v_x \\
        v_z \\
        -g\sin(\theta)\\
        g\cos(\theta) - g \\
        \dot{\theta} \\
        0
    \end{bmatrix} +
    \begin{bmatrix}
        0 & 0\\
        0 & 0\\
        -\frac{\sin(\theta)}{m} & 0\\
        \frac{\cos(\theta)}{m} & 0\\
        0 & 0\\
        0 & \frac{1}{J}
    \end{bmatrix}u + 
    \begin{bmatrix}
        0\\
        0\\
        1\\
        0\\
        0\\
        0
    \end{bmatrix}d_2\\
    & + \begin{bmatrix}
        0 & 0\\
        0 & 0\\
        -\frac{0.25}{m} & 0\\
        \frac{0.968}{m} & 0\\
        0 & 0\\
        0 & \frac{1}{J}
    \end{bmatrix}u_d.
\end{align*}

The parameter values were set to $m=0.486$kg and $J=0.00383$ kg m\textsuperscript{2}. The signal $d_2$ was set to  $d_2=0.8+0.2\sin\left(\tfrac{2\pi t}{10}\right)$ to represent wind disturbance affecting $\Sigma_2$. Note that this choice of disturbance is only for simulation purposes; the $(T_e,\gamma,\delta)$-similarity framework in this work holds for any unknown disturbance.
The system matrix corresponding to $u_d$ is chosen to be a constant so that the differential dynamics are independent of $u_d$. Alternatively, one may use dynamics in the body frame instead of world frame. 

As is common in many robotic examples, we use a 2-D double integrator model as the simplified model (i.e. abstraction $\Sigma_1$) of $\Sigma_2$. 
Formally, we consider a linear system with states $x_1 = \begin{bmatrix}
    p_x^D & p_z^D & v_x^D & v_z^D
\end{bmatrix}$, where $\begin{bmatrix}
    p_x^D & p_z^D
\end{bmatrix}^\top$ denote the position in $x$ and $z$ directions, respectively, and $v_x^D$ and $v_y^D$ denote the lateral velocity and the velocity along the thrust axis, respectively. The control input $u_1 = \begin{bmatrix}
    u_{1,1} & u_{1,2}
\end{bmatrix}$ denotes the accelerations in the $x$ and $z$ direction, respectively. 
The dynamics are:
\begin{align*}
    \dot{x}_1 = \underbrace{\begin{bmatrix}
        0 & 0 & 1 & 0 \\
        0 & 0 & 0 & 1 \\
        0 & 0 & 0 & 0 \\
        0 & 0 & 0 & 0
    \end{bmatrix}}_{A_{\text{ol}}} x_1 +
     \underbrace{\begin{bmatrix}
        0 & 0 \\
        0 & 0 \\
        1 & 0 \\
        0 & 1
    \end{bmatrix}}_{B_{\text{ol}}}u_1.
\end{align*}
We set $d_1=0$ as the aim is that the output of $\Sigma_2$ follows that of $\Sigma_1$ while rejecting the disturbance affecting $\Sigma_2$. 
To ensure that the necessary condition for $(T_e,\gamma,\delta)$-similarity holds (cf. Theorem \ref{thm:necessary}), we design a gain matrix $K_{\Sigma_1}$ such that the closed-loop dynamics of $\Sigma_1$ is stable. Specifically, we design a Linear Quadratic Regulator (LQR) controller for $\Sigma_1$ to yield the closed-loop dynamics
\begin{align}\label{eq:cl_double_integrator}
    \dot{x}_1 = \left(A_{\text{ol}} + B_{\text{ol}}K_{\Sigma_1}\right)x_1,
\end{align}
where 
\begin{align*}
    A_{\text{ol}} + B_{\text{ol}}K_{\Sigma_1} = A_1 = \begin{bmatrix}
        0 & 0 & 1 & 0 \\
        0 & 0 & 0 & 1 \\
        -1 & 0 & -1.732 & 0 \\
        0 & -1 & 0 & -1.732
    \end{bmatrix}.
\end{align*}

To determine whether $\Sigma_2$ is $(T_e,\gamma,\delta)$-similar to closed-loop $\Sigma_1$ (cf. \eqref{eq:cl_double_integrator}), or equivalently, to solve equation \eqref{eq:LMI_W}, we impose the following bounds: $\theta \in [-\frac{\pi}{3}, \frac{\pi}{3}]$, $u_{2,1}\in [-3, 3]$.
To solve equation \eqref{eq:LMI_W}, we first discretized the space of $\theta$ and $u_{2,1}$ to construct an LMI (cf. \eqref{eq:LMI_W}) for each discretized value. 
We then searched for a quadratic matrix function $Y(\theta)$ and a constant matrix $W$ with $W \succeq  10^{-6}\mathbf{I}$ that satisfies every LMI constructed in the discretized space using MATLAB 2023b and YALMIP on a PC with 8 GB RAM. 
We observed that fixing $\gamma=0.001$, $\delta=0.01$, and $\mu=0.01$ leads to an improved performance by the solver.
The total CPU time to obtain $W$ and $Y(\theta)$, as reported by YALMIP, was approximately $3.2$ seconds. 
Upon obtaining $W$ and $Y(x)$, we determine $u_d$ via equation \eqref{eq:d_2_form}. 
Since the abstract system and the PVTOL system do not have the same number of states, to compute $x^*(t)$, we used a $\begin{bmatrix}
    \mathbf{I}_{4\times 4} & \mathbf{0}_{4\times 2}^\top
\end{bmatrix}$ to project the state of $\Sigma_1$ to $\Sigma_2$.
We note that advanced optimization techniques such as Sum-of-Squares (SOS) programming \cite{papachristodoulou2005tutorial} may be used to solve equation \eqref{eq:LMI_W}. However, for large-scale systems, SOS is known to suffer from scalability issues \cite{ahmadi2017improving}.

Since $(T_e,\gamma,\delta)$-similarity characterizes the bound for every $u_1$ and $u_2$, we set $u_2=0$ circumventing designing a reference trajectory for the PVTOL and designing a controller for the PVTOL to track the reference trajectory. We present the numerical results in Figure \ref{fig:quad+DI}.

\begin{figure}[t]
    \centering
    \begin{subfigure}[t]{0.45\columnwidth}
        \centering
        \includegraphics[scale = 0.1]{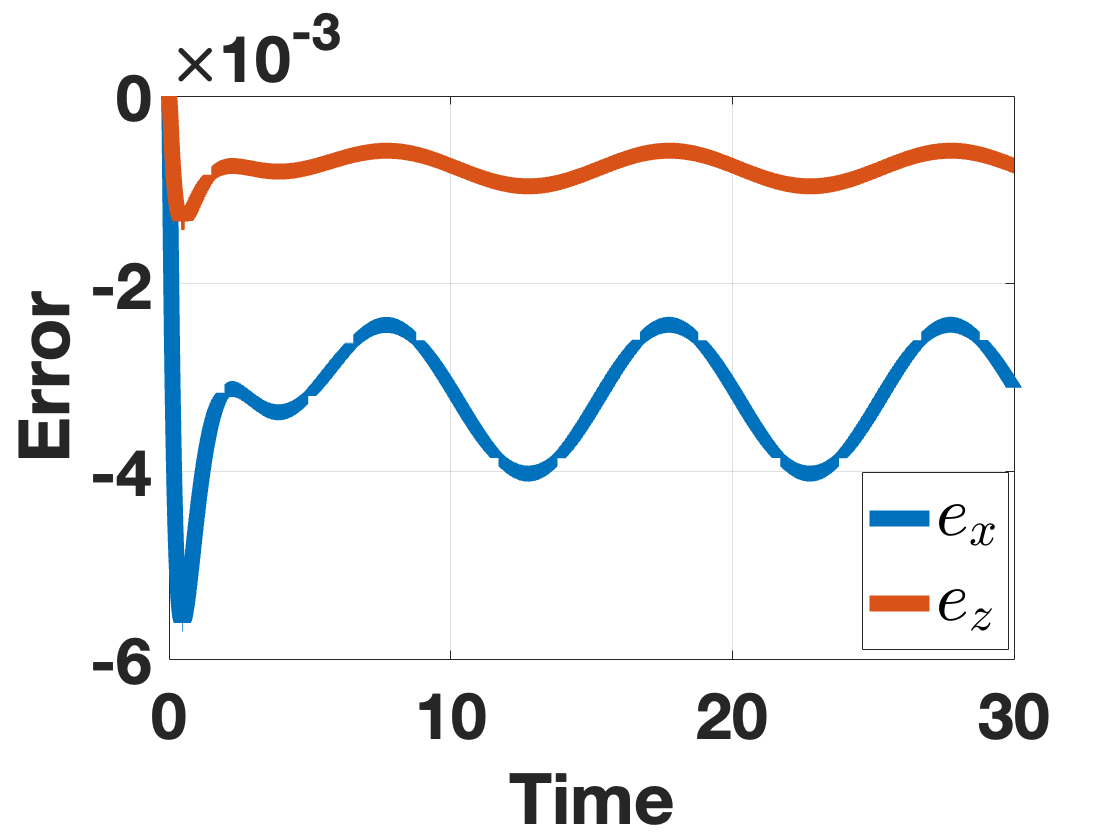}
        \caption{Error between the position of the PVTOL and the double-integrator. The PVTOL is $(T_e,\gamma,\delta)$-similar to the double-integrator.
        The red (resp. blue) curve denotes the error between the position in the $z$ (resp. $x$) direction.}
        \label{fig:fig:quad+DI_error}
    \end{subfigure}%
    \hfill
    \begin{subfigure}[t]{0.45\columnwidth}
        \centering
        \includegraphics[scale = 0.1]{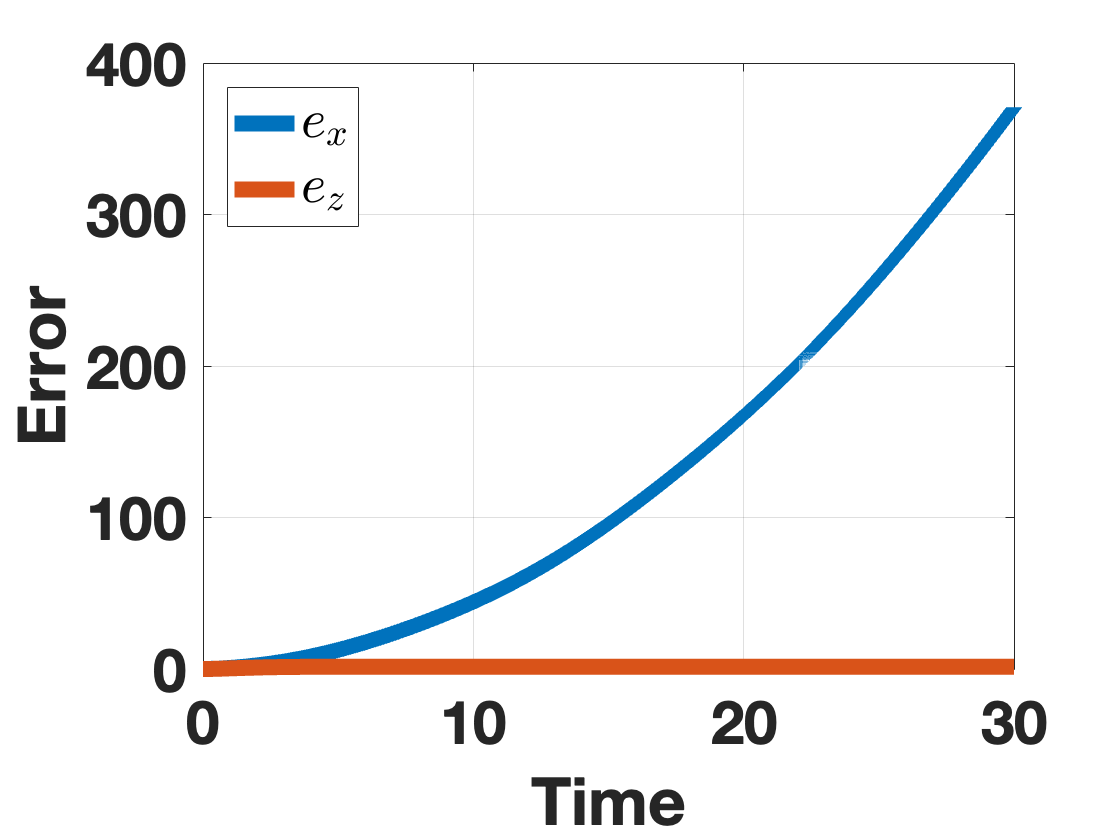}
        \caption{Error between the position of the PVTOL and the double-integrator. The red (resp. blue) curve denotes the error between the position in the $z$ (resp. $x$) direction.}
        \label{fig:fig:quad+DI_no_similarity}
    \end{subfigure}
    \caption{Numerical results for robust hierarchical control of PVTOL using a double-integrator via $(T_e,\gamma,\delta)$-similarity. The disturbance acting on the PVTOL is $0.8+0.2\sin(\tfrac{2\pi t}{10})$.
    (Left) The error between the outputs of the FOM and its ROM when FOM is $(T_e,\gamma,\delta)$-similar to the ROM. (Right) The error between the outputs of the FOM and the ROM without $(T_e,\gamma,\delta)$-similarity.}
    \label{fig:quad+DI}
\end{figure}

From Figure \ref{fig:fig:quad+DI_error}
error between the outputs, i.e., the position, of the closed-loop double integrator and the PVTOL in Figure \ref{fig:fig:quad+DI_error}. From Figure \ref{fig:fig:quad+DI_error}, the error between the position between the PVTOL and its abstraction, i.e., the double-integrator is small meaning that the outputs of the PVTOL approximately follows that of the double-integrator even under the presence of unknown disturbance. In contrast, from Figure \ref{fig:fig:quad+DI_no_similarity}, the error between the $x$-position of the PVTOL and the double-integrator increases with time when the PVTOL is not $(T_e,\gamma,\delta)$-similar to its abstract model.

\subsection{Abstract Model Enhancement and Model Reduction}\label{subsec:App_2}

Another possible application of $(T_e,\gamma,\delta)$-similarity framework presented in this work is to enhance available abstract models, such as linearized or reduced models, of a nonlinear system. 
By enhancing an available abstract model, we mean the following.
\begin{enumerate}
    \item \textbf{Case 1 ($d_1(t)=\mathbf{0}$):} When the nonlinear model is not affected by any disturbance, then by enhancement we mean that we can improve the abstract model's performance such that the output of the abstract model is similar to that of the associated nonlinear model either beyond the regime where linearized model works (when abstract model is a linearized model) or by reducing the error between the outputs of the nonlinear and its abstract model.

    \item \textbf{Case 2 ($d_1(t)\neq \mathbf{0}$):} In applications such as designing digital-twins of reduced complexity of nonlinear system, we require the abstract model to behave similar to the associated nonlinear system even under the presence of arbitrary unknown disturbance $d_1(t)$. 
    Note that, since the nonlinear system is operating in a real environment (and not in a simulation) and the digital-twin operates in parallel at a different location, one cannot apply $d_1(t)$ to the digital-twin as $d_1(t)$ is unknown. In this context, by enhancing the abstract model, we mean to improve the performance of the abstract model such that the output of the abstract model is similar to that of the nonlinear system even under the presence of unknown disturbance $d_1(t)$. In \cite{bajaj2025online}, it has been shown that many reduced models obtained by using classical methods do not satisfy this property.
\end{enumerate}

To utilize $(T_e,\gamma,\delta)$-similarity framework for enhancing abstract models in both case 1 and case 2, we consider the nonlinear system as $\Sigma_1$ and its abstract model as $\Sigma_2$ and determine whether $\Sigma_2$ is $(T_e,\gamma,\delta)$-similar to $\Sigma_1$. If the answer is in the affirmative, we refer to the abstract model along with the application of $d_2(t)$ as $(T_e,\gamma,\delta)$-abstract model, denoted as $(T_e,\gamma,\delta)$-$\Sigma_2$, of $\Sigma_1$. 

We present a numerical study, each for Case 1 and Case 2 and present the results in Figure \ref{fig:enhanced_linearized} and \ref{fig:enhanced_rom}, respectively. For both cases, the implementation details are analogous to that in Section \ref{subsec:App_1} and thus, has been omitted for brevity.

\begin{figure}[t]
    \centering
    \begin{subfigure}[t]{0.45\columnwidth}
        \centering
        \includegraphics[scale = 0.1]{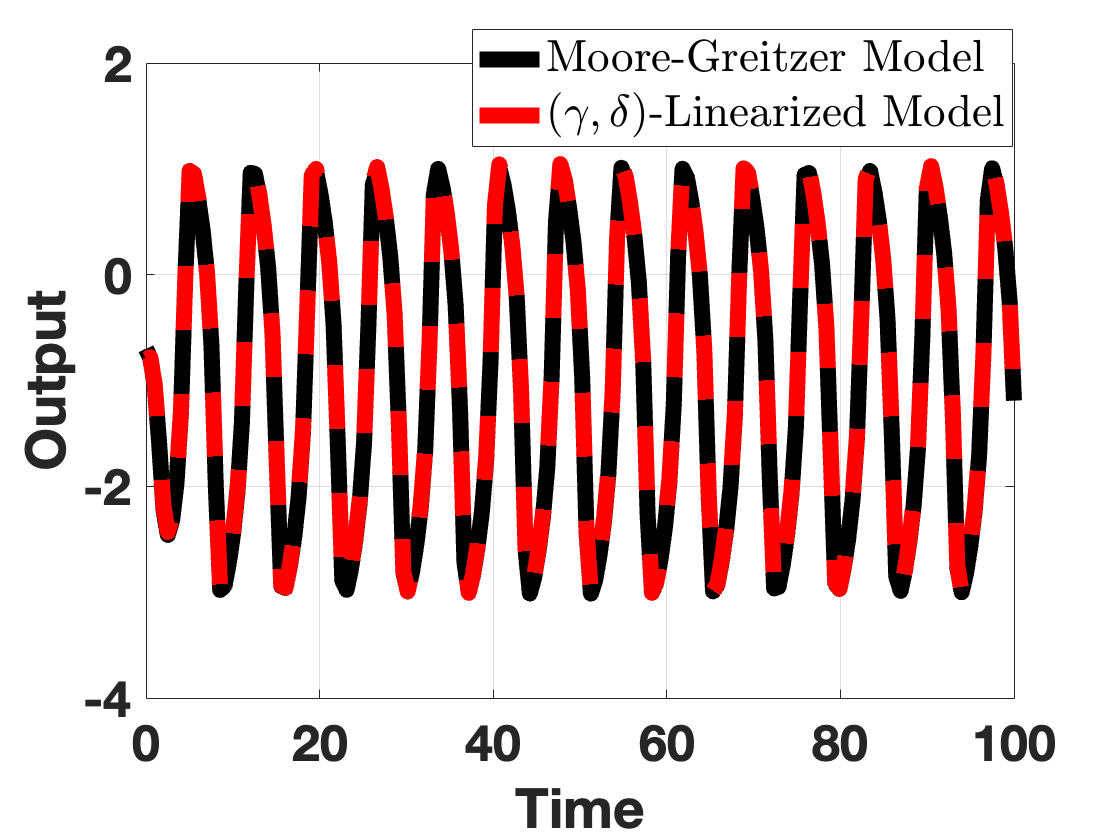}
        \caption{Comparison of outputs of $\Sigma_1$ and $(T_e,\gamma,\delta)$-$\Sigma_2$, respectively.}
        \label{fig:linearized_output}
    \end{subfigure}%
    \hfill
    \begin{subfigure}[t]{0.45\columnwidth}
        \centering
        \includegraphics[scale = 0.1]{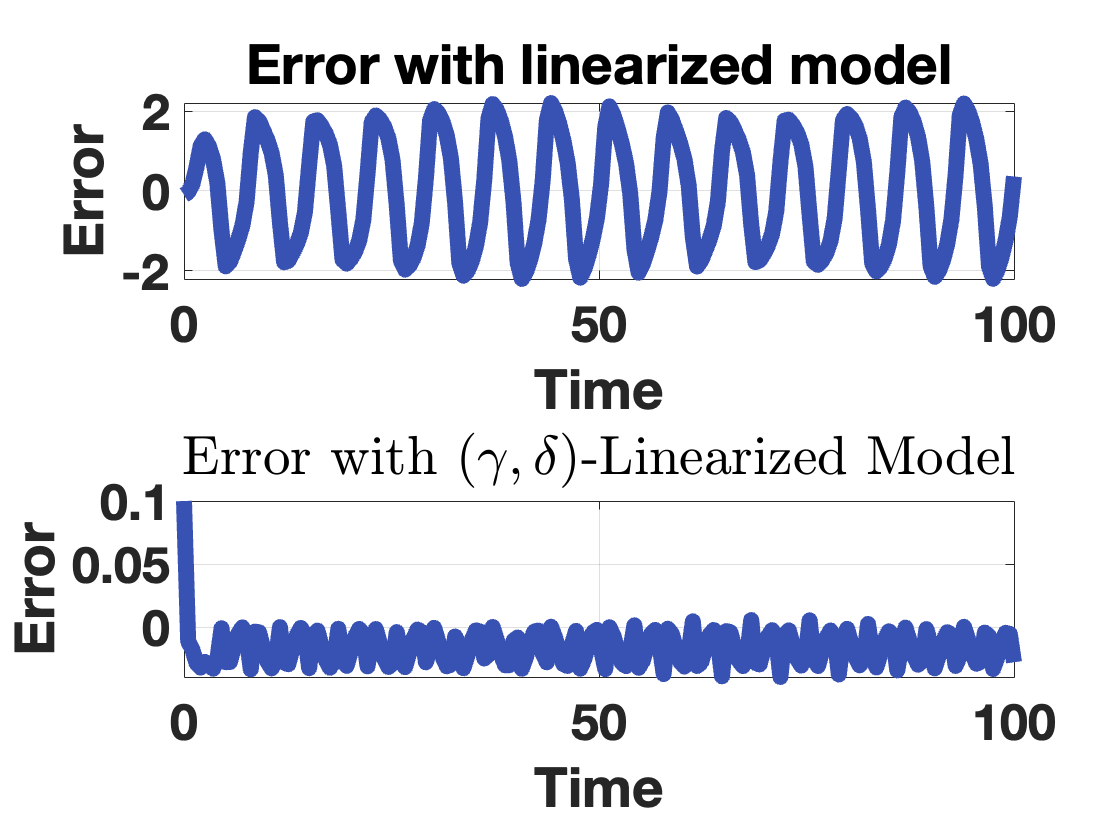}
        \caption{Error between the output of the Moore-Greitzer model and its linearized (top) and its $(T_e,\gamma,\delta)$-linearized model (bottom).}
        \label{fig:linearized_error}
    \end{subfigure}
    \caption{Numerical results for enhancing a linearized model when $d_1(t) = \mathbf{0}$ (Case 1).}
    \label{fig:enhanced_linearized}
\end{figure}

For the numerical study for Case 1, we consider the Moore-Greitzer model as $\Sigma_1$ and its linearized model as $\Sigma_2$ and present the numerical results in Figure \ref{fig:enhanced_linearized}.

Moore-Greitzer model is a simplified model of surge-stall dynamics of a jet engine and is one of the classical models that has motivated substantial development
in nonlinear control design \cite{krstic1995nonlinear,moore1986theory,manchester2018robust}. In particular, the nonlinear model $\Sigma_1$ used in this work is
\begin{align*}
    \begin{bmatrix}
        \dot{\psi}\\
        \dot{\phi}
    \end{bmatrix} = 
    \begin{bmatrix}
        \phi + u_1\\
        -\psi - 1.5\phi^2-0.5\phi^3 + d_1
    \end{bmatrix},
\end{align*}
with $u_1$ as the input and a senor on $\phi$. 
The linearized model $\Sigma_2$ is computed at the origin. 

Figure \ref{fig:enhanced_linearized} presents the numerical results for Case 1, i.e., when a linearized model of a non-linear model is enhanced to operate beyond the linearized regime. From Figure \ref{fig:linearized_output}, the outputs of $\Sigma_1$ and $(T_e,\gamma,\delta)$-$\Sigma_2$ are approximately close, implying that the $(T_e,\gamma,\delta)$-linearized model is a good approximation of its respective FOM.  
Figure \ref{fig:linearized_error} presents the error $y_1 - y_2$ between (1) $\Sigma_1$ and $\Sigma_2$ and (2) $\Sigma_1$ and $(T_e,\gamma,\delta)$-$\Sigma_2$. From Figure \ref{fig:linearized_output}, it follows that the output of $(T_e,\gamma,\delta)$-$\Sigma_2$ closely matches that of $\Sigma_1$. From Figure \ref{fig:linearized_error}, the error between the output of $\Sigma_1$ and $\Sigma_2$ is reduced by enhancing the linearized model to $(T_e,\gamma,\delta)$-linearized model.

\begin{figure}[t]
    \centering
        \centering
        \includegraphics[scale = 0.18]{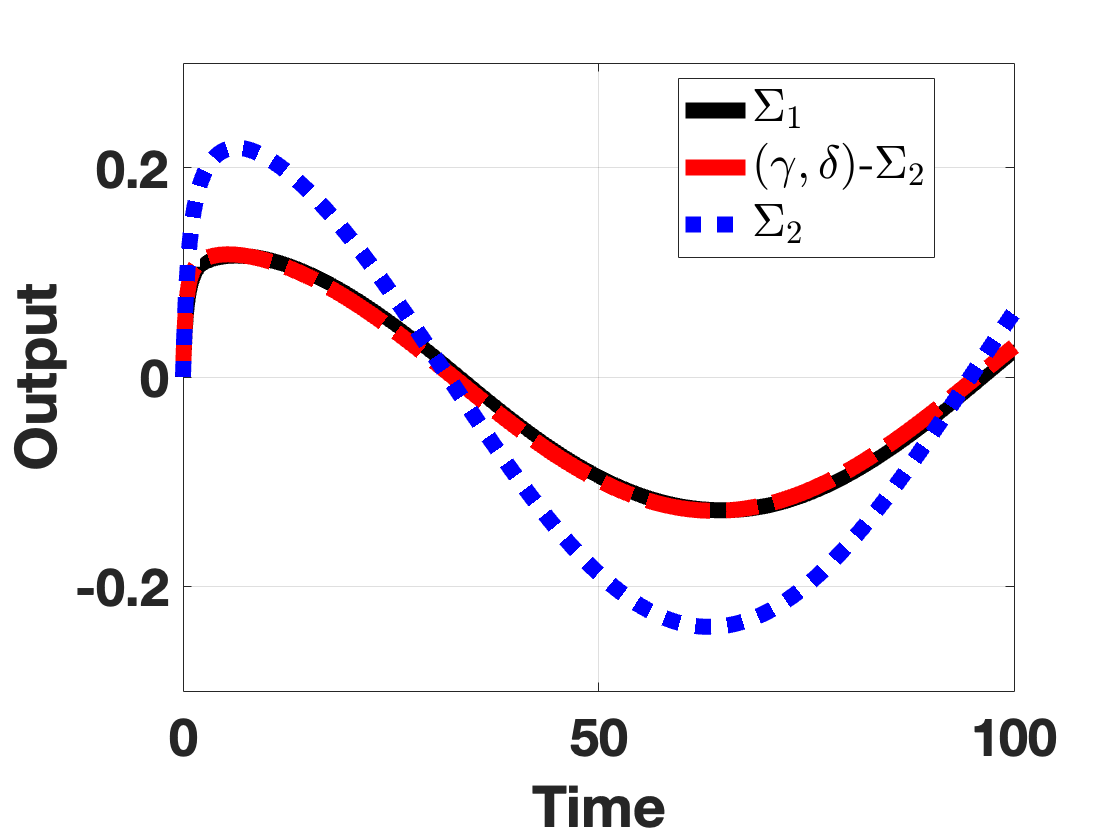}
    \caption{Numerical results for enhancing an ROM for designing digital-twins of reduced complexity (Case 2) in the presence of modeling errors. The black curve denotes the output of FOM $\Sigma_f$, the blue dotted curve denotes the output of the ROM $\Sigma_r$, and the red dashed curve denotes the output of $(T_e,\gamma,\delta)$-$\Sigma_r$.}
    \label{fig:enhanced_rom}
\end{figure}

For the numerical study for Case 2, i.e., to enhance an available ROM of a non-linear FOM for the purpose of designing digital-twins of reduced order, we consider the example of a circuit model analogous to the one considered in \cite{besselink2014model}. Specifically, in the electronic circuit, the resistors and capacitors have
unit resistance and capacitance, respectively. The output $y\in \mathbb{R}$ is taken at the first node and represents the output voltage. 
The nonlinearity is due to the resistor at the first node. This resistor satisfies $i_c = \psi(x_1^1)$ with $x_1^1$ denoting the voltage across the resistor and $i_c$ the current across the resistor. 
The nonlinear model of the circuit is
\begin{align*}
    f_1(x_1) = A_1x_1 + \varphi(x_1) + E_1d_1
\end{align*}
with
\begin{align*}
    A_1 &= \begin{bmatrix}
        -2 & 1 & & 0\\
        1 & -2 & \ddots \\
         & \ddots & \ddots & 1\\
         0 & & 1 & -2
    \end{bmatrix}, 
    \varphi(x_1) = -\begin{bmatrix}
        \varphi(x_1^1)\\
        1\\
        \vdots \\
        1
    \end{bmatrix},
\end{align*}
$\varphi(v) = \sign(v)(v^2)$, and $d_1(t) = \sin(0.2\pi t)$. Here, the state $x_1^{i} \in \mathbb{R}^n$, $i\in \{1, \dots, n\}$ with $n$ set to $100$, represents the voltages at the nodes. 
The nonlinear circuit represents $\Sigma_1$. 
The matrix $E_1$ is chosen as $0.1 A_1$ and represents modeling errors associated with the FOM.
We consider the reduced model $\Sigma_2$, with matrix $A_2$ of order $4$, as determined in \cite{besselink2014model} and select $d_1(t)= x_1(t)$. We compute the $(T_e,\gamma,\delta)$-reduced model, i.e., $(T_e,\gamma,\delta)$-$\Sigma_2$, of $\Sigma_1$ by solving equation \eqref{eq:LMI_W} using an analogous approach as in Case 1 with the bound on $\varphi(x)\in [-3,3]$ and present the numerical results in Figure \ref{fig:enhanced_rom}.

Figure \ref{fig:enhanced_rom} presents the outputs of FOM $\Sigma_1$, ROM $\Sigma_2$, and the $(T_e,\gamma,\delta)$-$\Sigma_2$.
From Figure \ref{fig:enhanced_rom}, it follows that the output of $(T_e,\gamma,\delta)$-$\Sigma_2$ closely matches that of $\Sigma_1$ even under the presence of the modeling errors highlighting the applicability of this framework for designing digital-twins of reduced complexity. 

\begin{remark}
    Many approaches for model reduction of nonlinear systems lack an analytical error bound between the performance of the reduced model and the respective non-linear system. By combining the existing approaches for model reduction with $(T_e,\gamma,\delta)$-similarity framework, these approaches may not only have an improved performance but will also have an analytical error bound characterized in equation \eqref{eq:similarity_def}. 
\end{remark}

In contrast to enhancing a given abstract model of a nonlinear system, one may choose to design an abstract model. In particular, by considering matrices $A_2,B_2,$ and $C_2$ as variables and solving equation \eqref{eq:LMI_W}, it might be possible to determine a $(T_e,\gamma,\delta)$-abstract model for system $\Sigma_1$. This approach has been explored in the case of linear systems in \cite{bajaj2025online,pirastehzad2025hierarchical}. However, this approach may be computationally expensive as equation \eqref{eq:LMI_W} will now be an infinite dimensional bilinear matrix inequality. We also note that this approach will yield the differential dynamics of the $(T_e,\gamma,\delta)$-ROM using which the dynamics of the $(T_e,\gamma,\delta)$-ROM will need to be determined.

\section{Conclusion and Future Works}\label{sec:conclusion}
In this work, we present a notion of system comparison for non-deterministic nonlinear systems via $(T_e,\gamma,\delta)$-similarity. The proposed notion provably satisfies reflexivity and transitivity properties making it suitable for comparing nonlinear systems.
We characterize the notion of $(T_e,\gamma,\delta)$-similarity for nonlinear systems via a feasibility problem consisting of LMIs and provide necessary and sufficient conditions for the same. By introducing the notion of differential $(T_e,\gamma,\delta)$-similarity, we establish equivalence relation between that $(T_e,\gamma,\delta)$-similarity of nonlinear systems  and their differential dynamics. We illustrate the applicability of this framework via two applications. 

One possible limitation of this work is that it requires the information of the state of $\Sigma_1$ to compute $u_d(t)$. In some applications, this may not be feasible or a projection matrix may be required. In this regard, an immediate extension is to the case when only the output of system $\Sigma_1$ is available. 
In the future, we plan to extend the notion of $(T_e,\gamma,\delta)$-similarity to hybrid systems along with its application to complex robotic systems.



\bibliographystyle{ieeetr}
\bibliography{reference}

\end{document}